\def\R{\mathbb{R}}
\def\rP{\mathbb{P}}
\def\L{\mathop{\rm Law}}
\def\P{{\mathcal P}}
\def\bm{{\boldsymbol m}}
\def\bmu{{\boldsymbol \mu}}
\def\bnu{{\boldsymbol \nu}}
\def\sX{{\mathsf X}}
\def\sY{{\mathsf Y}}
\def\sA{{\mathsf A}}
\def\sE{{\mathsf E}}
\def\sW{{\mathsf W}}
\def\sV{{\mathsf V}}
\theoremstyle{remark}
\newtheorem{definition}{Definition}
\newtheorem{theorem}{Theorem}
\newtheorem{proposition}{Proposition}
\newtheorem{lemma}{Lemma}
\theoremstyle{remark}
\newtheorem{remark}{Remark}
\newtheorem{assumption}{Assumption}
\def\Tiny{\fontsize{4pt}{4pt}\selectfont}
\newcommand*{\eqdef}{\ensuremath{\overset{\mathclap{\text{\Tiny def}}}{=}}}
\begin{document}
\begin{frontmatter}
\sloppy
\title{Large Deviations Principle for Discrete-time Mean-field Games}
\author[NS]{Naci Saldi}
\address[NS]{\"{O}zye\u{g}in University, \c{C}ekmek\"{o}y, \.{I}stanbul, Turkey, Email:naci.saldi@ozyegin.edu.tr}

\begin{abstract}
In this paper, we establish a large deviations principle (LDP) for interacting particle systems that arise from state and action dynamics of discrete-time mean-field games under the equilibrium policy of the infinite-population limit. The LDP is proved under weak Feller continuity of state and action dynamics.
The proof is based on transferring LDP for empirical measures of initial states and noise variables under setwise topology to the original game model via contraction principle, which was first suggested by Delarue, Lacker, and Ramanan to establish LDP for continuous-time mean-field games under common noise. We also compare our work with LDP results established in prior literature for interacting particle systems, which are in a sense uncontrolled versions of mean-field games.  
\end{abstract}

\begin{keyword}                           
Mean-field games; Large deviations principle; Interacting particle systems. 
\end{keyword} 

\end{frontmatter}

\section{Introduction}\label{sec0}

The purpose of this paper is to establish a large deviations principle (LDP) for interacting particle systems that arise from state and action dynamics of discrete-time mean-field games under the infinite-population equilibrium policies. Namely, by assuming the existence of an equilibrium in the infinite-population, our goal is to establish exponential decline of the large deviation probabilities of empirical state-action measures from limiting distribution when each agent applies infinite-population equilibrium policy in the finite agent setting.

Large deviations principle for mean-field games has been studied recently in \cite{LaRa19,DeLaRa20}. In \cite{LaRa19}, authors consider a static mean-field game with a centralized information structure (i.e., each agent has access to the entire type vector). They establish that sets of empirical measures of type and action vectors under Nash equilibria of the finite agent game, where in this case it is indeed feasible to attain Nash equilibria due to the centralized information structure, satisfy LDP. In \cite{DeLaRa20}, authors study a continuous-time mean-field game under a common noise with again a centralized information structure (i.e., each agent has access to the entire state vector). In this setting, owing to the centralized information structure, it is possible to characterize the unique Nash equilibria of the finite-agent games by solving the so-called Nash systems. Then, they demonstrate that under the above-mentioned unique Nash equilibria, empirical measures of states satisfy LDP. The main challenge in this work is the common noise that affects dynamics of the every agent. 

It is important to note that our LDP result is intrinsically different than LDP results in \cite{LaRa19,DeLaRa20}, since we establish LDP for empirical measures of state-action pairs under the infinite-population equilibrium policy, whereas in \cite{LaRa19,DeLaRa20}, LDP results are proved under finite-population equilibrium policies. Therefore, in our case, the  mean-field game model reduces to a weakly interacting particle system. The first reason for this distinction is that under decentralized information structure, we need quite restrictive assumptions to establish the existence of a Nash equilibrium in the finite-population setting for discrete-time mean-field games. Secondly, mean-field game theory has been developed to approximate Nash equilibria of finite-population games, which are hard to compute, via equilibrium in the infinite-population limit. The approximation result depends highly on the asymptotic convergence of the empirical measures of state-action pairs under the infinite-population equilibrium policy to the limiting distribution. Therefore, it is also interesting to study the large deviations behavior of the same empirical measures of state-action pairs.

In continuous-time, one of the classical references for LDP of interacting particle systems is \cite{DaGa87}, which establishes a LDP for empirical measures of states using discretization and projective limit arguments. Another work toward that direction is \cite{BuDuFi12}, where a LDP principle for empirical measures of states is demonstrated via weak convergence method. We refer the reader to \cite{BuDuFi12,DeLaRa20} for a more comprehensive literature review on continuous-time setup. There are also some works on LDP for discrete-time weakly interacting particle systems \cite{MoZa03,DaMo05,MoGu98}, which are intimately related to our model. However, in those works, LDP results are established under more restrictive conditions on system components than ours. Moreover, in \cite{DaMo05,MoGu98}, dynamics of interacting particle systems are slightly different than the system dynamics considered in this paper, and so,  corresponding LDP results are incompatible. We refer the reader to Section~\ref{sec2} and Remark~\ref{remark1} to see the precise distinction between our work and those works. 


\noindent\textbf{Notation.} 
For a metric space $\sE$, we let $\P(\sE)$, $C_b(\sE)$, and $B(\sE)$ denote the set of all Borel probability measures, the set of all bounded and continuous real-valued functions, and the set of all bounded and Borel measurable real-valued functions on $\sE$, respectively. A sequence or a net $\{\mu_{\alpha}\}$ of probability measures on $\sE$ is said to converge weakly (setwise) to a probability measure $\mu$ if $\int_{\sE} g(e) \mu_{\alpha}(de)\rightarrow\int_{\sE} g(e) \mu(de)$ for all $g \in C_b(\sE)$ (for all $g \in B(\sE)$). The set of probability measures $\P(\sE)$ is endowed with the Borel $\sigma$-algebra induced by the weak convergence topology. For any $0<T\leq\infty$, let $\sE^T \eqdef \prod_{t=0}^T \sE$. Given $\Lambda \in \P(\sE^T)$, let $\Lambda^{z(t)} \eqdef \Lambda(\,\cdot\,|z(t))$ denote the conditional distribution of $\{z(k)\}_{k\neq t}^T$ given $z(t)$ under $\Lambda$. For any $\nu \in \P(\sE_1\times\sE_2)$, where $\sE_1$ and $\sE_2$ are metric spaces, we let $\nu_{\sE_i}$ denote the marginal of $\nu$ on $\sE_i$, $i=1,2$. 
A sequence or a net $\{\mu_{\alpha}\}$ of probability measures on $\sE_1\times\sE_2$ is said to converge in $ws$-topology to a probability measure $\mu$ if $\int_{\sE_1\times\sE_2} g(e_1,e_2) \mu_{\alpha}(de_1,de_2)\rightarrow\int_{\sE_1\times\sE_2} g(e_1,e_2) \mu(de_1,de_2)$ for all $g \in B(\sE_1\times\sE_2)$ continuous on $\sE_1$.
The notation $v\sim \nu$ means that the random element $v$ has distribution $\nu$. Unless  specified otherwise, the term "measurable" will refer to Borel measurability.

\section{Game Model}\label{sec1}

Although the main result is a LDP for discrete-time interacting particle system, we still need to briefly describe the mean-field game model as the interacting particle system arises from its state-action dynamics under the infinite-population equilibrium policy.

In the game model, we have $N$-agents with an identical state space $\sX$ and an identical action space $\sA$, where $\sX$ and $\sA$ are Polish spaces. For every $t \in \{0,1,2,\ldots\}$ and every $i \in \{1,2,\ldots,N\}$, the random elements $x^N_i(t) \in \sX$ and $a^N_i(t) \in \sA$ denote the state and the action of Agent~$i$ at time $t$. The state $x_i^N(t)$ of Agent~$i$ evolves as follows:
\begin{align}
x_i^N(0) \sim \mu(0), \,\,\,\, x_i^N(t+1) = F_t\left(x_i^N(t),a_i^N(t),e^N(t),w_i^N(t)\right), \nonumber
\end{align}
where $F_t: \sX \times \sA \times \P(\sX) \times \sW \rightarrow \sX$ is a measurable function. Here,  
$
e^{N}(t)(\,\cdot\,) \eqdef \frac{1}{N} \sum_{i=1}^N \delta_{x_i^N(t)}(\,\cdot\,) \in \P(\sX) 
$
is the empirical measure of the states at time $t$, where $\delta_x\in\P(\sX)$ is the Dirac measure at $x$, and $w_i^N(t) \in \sW$ is the noise with distribution $\xi_t$. The initial states $x^N_i(0)$ are independent and identically distributed (i.i.d.) according to $\mu(0)$, and noise variables $\{w_i^N(t)\}_{t\geq0}$ are independent in time $t$ and in position $i$, and also independent of initial states.

A (Markov) \emph{policy} $\pi^i = \{\pi_t^i\}_{t\geq0} \in \Pi_i$ for Agent~$i$ is a sequence of stochastic kernels on $\sA$ given $\sX$.
Therefore, given the current state configuration $\left(x_1^N(t),\ldots,x_N^N(t)\right)$, the actions are generated as follows:
$
\left(a_1^N(t),\ldots,a_N^N(t)\right) \sim \bigotimes^N_{i=1} \pi^i_t\big(\,\cdot\,\big|x^N_i(t)\big). 
$
For each Agent~$i$, the initial distribution $\mu(0)$ and the $N$-tuple of policies $(\pi^1,\ldots,\pi^N)$ induces a cost function $W_i^{N}(\pi^1,\ldots,\pi^N)$  (e.g., discounted cost, average cost, finite-horizon cost). An $N$-tuple of policies ${\boldsymbol \pi}^{(N*)}= (\pi^{1*},\ldots,\pi^{N*})$ is a \emph{Nash equilibrium} if
$
W_i^{N}({\boldsymbol \pi}^{(N*)}) = \inf_{\pi^i \in \Pi_i} W_i^{N}({\boldsymbol \pi}^{(N*)}_{-i},\pi^i) \nonumber
$
for each $i=1,\ldots,N$, where ${\boldsymbol \pi}^{(N*)}_{-i} \eqdef (\pi^{j*})_{j\neq i}$.

For this game model, it is, in general, infeasible to obtain a Nash equilibrium due to the decentralized information structure (i.e., each agent has only access to its own state) and the large number of coupled agents. However, if the number of agents is large enough, a canonical method to deal with this challenge is to introduce the infinite-population limit $N\rightarrow\infty$ of the game to obtain an approximate Nash equilibrium. 

In the infinite population game model, if we suppose that the empirical distribution $e^N(t)$ converges to the deterministic  measure $\mu(t)$ for each $t\geq0$, the state $x(t)$ of a generic agent evolves as follows:
\begin{align}
x(0) \sim \mu(0), \,\,\,\, x(t+1) = F_t\left(x(t),a(t),\mu(t),w(t)\right). \nonumber 
\end{align}
Let $\bmu \eqdef (\mu(t))_{t\geq0}$, which describes the collective behavior of all agents in the infinite population limit. In this model, a (Markov) policy $\pi = \{\pi_t\}_{t\geq0} \in \Pi$ of a generic agent is again a sequence of stochastic kernels on $\sA$ given $\sX$. A policy $\pi^{*} \in \Pi$ is optimal for $\bmu$ if
$
W_{\bmu}(\pi^{*}) = \inf_{\pi \in \Pi} W_{\bmu}(\pi), 
$
where $W_{\bmu}$ is obtained by replacing $e^N(t)$ with $\mu(t)$ in $W_i^{N}$. The equilibrium notion for the infinite-population game is defined as follows. A pair $(\pi^{\bm},\bmu^{\bm})$ is a \emph{mean-field equilibrium} if $\pi^{\bm}$ is optimal for $\bmu^{\bm}$ and $\bmu^{\bm}$ is the collection of state distributions under the policy $\pi^{\bm}$.

In the literature, the existence of mean-field equilibrium has been established under mild assumptions (see \cite{SaBaRaSIAM}), which will not be stated here. Instead, we suppose that there exists a mean-field equilibrium $(\pi^{\bm},\bmu^{\bm})$. Note that one can always write the evolution of the action process $a(t)$ under $\pi^{\bm}$ as a noise-driven dynamical system
$$
a(t) = G_t(x(t),v(t)),
$$
where $G_t:\sX \times \sV \rightarrow \sA$ is measurable and $\{v(t)\}_{t\geq0}$ is an independent noise process on some Polish space $\sV$ with $v(t) \sim \alpha_t$ for each $t\geq0$. To establish the large deviations principle, we impose the following assumption.

\begin{assumption}\label{assumption1}
We suppose that for each $t\geq0$, the mapping $F_t(\,\cdot\,,w)$ is continuous in $(x,a,\mu) \in \sX \times \sA \times \P(\sX)$ for any $w \in \sW$, where $\P(\sX)$ is endowed with weak topology, and the mapping $G_t(\,\cdot\,,v)$ is continuous in $x \in \sX$ for any $v \in \sV$. 
\end{assumption}

Note that given the noise variable $w$, continuity of $F_t$ with respect to $(x,a,\mu)$ corresponds to the weak Feller continuity of the transition probability $p_t^{\mu}(\,\cdot\,|x,a)$, where the transition probability is defined as: 
$$
p_t^{\mu}(\,\cdot\,|x,a) \eqdef \int_{\sW} \delta_{F(x,a,\mu,w)}(\,\cdot\,) \, \xi_t(dw). 
$$
This is a quite weak assumption and is satisfied by most of the systems in real-life. However, the continuity assumption on $G_t$ is a bit restrictive. For discounted-cost and finite-horizon cost, one way to establish the continuity of $G_t$ is to assume that $\sX = \R^d$ and $\sA \subset \R^m$ is convex. Moreover, suppose that $p^{\mu}_t(dx'|x,a) = \varrho^{\mu}_t(x'|x,a) \, m(dx')$, where $m$ denotes the Lebesgue measure and $\varrho^{\mu}_t$ is the corresponding density function. Then, we assume that both $\varrho_t$ and $c_t$ are strictly convex in $a$ (see \cite[Remark 7]{SaBaRaSIAM}).

\subsection{Main Result}\label{sub2sec1}

Note that the mean-field equilibrium policy $\pi^{\bm}$, if it is used by all agents, constitutes an approximate Nash equilibrium for the $N$-agent game model if $N$ is sufficiently large. This result depends highly on the following law of large numbers (LLN) principle. To state the LLN,  for each $t\geq0$, let $b^{\bm}(t) \eqdef \mu^{\bm}(t)\otimes\pi^{\bm}_t$ and for the $N$-agent game model, we define the empirical measure of state-action pairs under policy $(\pi^{\bm},\ldots,\pi^{\bm})$ as follows:
$$
b^N(t) \eqdef \frac{1}{N} \sum_{i=1}^N \delta_{\left(x_i^N(t),a_i^N(t)\right)}. 
$$ 

\begin{theorem}\label{theorem1}
Under Assumption~\ref{assumption1}, for each $t\geq0$, the empirical measure $b^N(t)$ converges in distribution to the deterministic measure $b^{\bm}(t)$ as $N \rightarrow \infty$. 
\end{theorem}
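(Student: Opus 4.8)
\emph{Proof proposal.} The plan is to argue by induction on $t$ after two reductions. First, since the limit $b^{\bm}(t)$ is deterministic, convergence in distribution is equivalent to convergence in probability in $\P(\sX\times\sA)$; and since $\sX\times\sA$ is Polish, the weak topology on $\P(\sX\times\sA)$ is metrizable with a countable convergence-determining family $\{g_k\}\subset C_b(\sX\times\sA)$. Hence it suffices to prove, for each fixed $g\in C_b(\sX\times\sA)$, that
$$
\int g\,db^N(t)=\frac1N\sum_{i=1}^N g\big(x_i^N(t),a_i^N(t)\big)\longrightarrow \int g\,db^{\bm}(t)\quad\text{in probability,}
$$
because convergence for every member of such a family forces $b^N(t)\to b^{\bm}(t)$ weakly in probability (the standard metric for the weak topology built from $\{g_k\}$ then tends to $0$ in probability). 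Throughout I use the noise representation $a_i^N(t)=G_t(x_i^N(t),v_i^N(t))$ with $v_i^N(t)\sim\alpha_t$ i.i.d.\ and independent of everything generated up to time $t$.

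For the base case $t=0$, the pairs $(x_i^N(0),a_i^N(0))=\big(x_i^N(0),G_0(x_i^N(0),v_i^N(0))\big)$ are i.i.d.\ with common law $\mu^{\bm}(0)\otimes\pi^{\bm}_0=b^{\bm}(0)$, so $b^N(0)\to b^{\bm}(0)$ by the law of large numbers for empirical measures of i.i.d.\ Polish-valued variables (Varadarajan's theorem). For the inductive step, assume $b^N(t)\to b^{\bm}(t)$ in probability; taking the $\sX$-marginal, a continuous operation, gives $e^N(t)\to\mu^{\bm}(t)$ in probability. Using $x_i^N(t+1)=F_t(x_i^N(t),a_i^N(t),e^N(t),w_i^N(t))$ and $a_i^N(t+1)=G_{t+1}(x_i^N(t+1),v_i^N(t+1))$, I introduce the bounded function
$$
\Gamma(x,a,w,v,\mu)\eqdef g\big(F_t(x,a,\mu,w),\,G_{t+1}(F_t(x,a,\mu,w),v)\big),
$$
which is jointly continuous by Assumption~\ref{assumption1}, so that $\int g\,db^N(t+1)=\int \Gamma(x,a,w,v,e^N(t))\,Q^N(t)(d(x,a,w,v))$, where $Q^N(t)\eqdef\frac1N\sum_{i=1}^N\delta_{(x_i^N(t),a_i^N(t),w_i^N(t),v_i^N(t+1))}$.

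The step then splits into two claims. (a) A conditional law of large numbers: given the history up to time $t$, the pairs $(w_i^N(t),v_i^N(t+1))$ are i.i.d.\ $\sim\zeta\eqdef\xi_t\otimes\alpha_{t+1}$ and independent of the states and actions at time $t$; for bounded continuous $\phi$ on $\sX\times\sA\times\sW\times\sV$, writing $\phi=\bar\phi+(\phi-\bar\phi)$ with $\bar\phi(x,a)\eqdef\int\phi(x,a,w,v)\,\zeta(dw,dv)\in C_b(\sX\times\sA)$, the averaged part $\int\bar\phi\,db^N(t)$ converges by the induction hypothesis while the centred part has conditional variance $O(1/N)$, giving $Q^N(t)\to b^{\bm}(t)\otimes\zeta$ in probability. (b) A joint-continuity argument: the functional $H(m,\mu)\eqdef\int\Gamma(x,a,w,v,\mu)\,dm$ is jointly continuous on $\P(\sX\times\sA\times\sW\times\sV)\times\P(\sX)$, whence $\int g\,db^N(t+1)=H(Q^N(t),e^N(t))\to H(b^{\bm}(t)\otimes\zeta,\mu^{\bm}(t))=\int g\,db^{\bm}(t+1)$ in probability, closing the induction.

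The hard part will be step (b): since $F_t$ and $G_t$ are only assumed continuous, not Lipschitz, there is no quantitative modulus to control $H$, and the two arguments $e^N(t)\to\mu^{\bm}(t)$ and $Q^N(t)\to b^{\bm}(t)\otimes\zeta$ must be perturbed simultaneously. I would prove joint continuity of $H$ by noting that a weakly convergent sequence $m_n\to m$ is tight, hence uniformly concentrated on a compact $K$, while $\{\mu_n\}\cup\{\mu\}$ is compact in $\P(\sX)$; on the product of $K$ with this compactum $\Gamma$ is uniformly continuous, which upgrades the pointwise convergence $\Gamma(\cdot,\mu_n)\to\Gamma(\cdot,\mu)$ to uniform convergence on $K$ and, together with $\int\Gamma(\cdot,\mu)\,d(m_n-m)\to0$, yields $H(m_n,\mu_n)\to H(m,\mu)$. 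The conditional law of large numbers in step (a), which replaces the classical propagation-of-chaos coupling in the presence of the mean-field interaction through $e^N(t)$, is the other point requiring care.
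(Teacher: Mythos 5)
Your overall architecture (reduction to a countable convergence-determining family, induction on $t$, conditional law of large numbers for the fresh noise, joint continuity in the measure argument) is the standard propagation-of-chaos route, and it is consistent with what the paper delegates to \cite[Proposition 4.4]{SaBaRaSIAM}. Step (a) is fine: conditionally on the $\sigma$-field generated by the trajectory data up to time $t$, the pairs $(w_i^N(t),v_i^N(t+1))$ are indeed i.i.d.\ $\zeta$ and independent of $(x_i^N(t),a_i^N(t))_{i=1}^N$, and your centering argument gives the $O(1/N)$ conditional variance. The base case is also correct.

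There is, however, a genuine gap in step (b): you assert that
$\Gamma(x,a,w,v,\mu)= g\bigl(F_t(x,a,\mu,w),G_{t+1}(F_t(x,a,\mu,w),v)\bigr)$
is \emph{jointly} continuous ``by Assumption~\ref{assumption1}''. Assumption~\ref{assumption1} only gives continuity of $F_t(\,\cdot\,,w)$ in $(x,a,\mu)$ for each \emph{fixed} $w$ and of $G_t(\,\cdot\,,v)$ in $x$ for each \emph{fixed} $v$; no continuity in the noise coordinates is assumed, and the paper is explicit about this elsewhere (the functions $h_f^t$ in the proof of Proposition~\ref{proposition3} are ``continuous for all arguments except the noise variables'', which is exactly why the paper works with setwise convergence and the $ws$-topology there). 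Consequently your functional $H(m,\mu)=\int\Gamma(\,\cdot\,,\mu)\,dm$ is \emph{not} continuous in $m$ for the weak topology on $\P(\sX\times\sA\times\sW\times\sV)$, and the tightness/uniform-continuity argument you sketch does not apply: weak convergence of $Q^N(t)$ cannot be tested against a function that is discontinuous in $(w,v)$. The repair is to run your step (a) decomposition \emph{directly on} $\Gamma(\,\cdot\,,e^N(t))$ rather than on generic test functions: since $e^N(t)$ is measurable with respect to the conditioning $\sigma$-field, the centred part $\frac1N\sum_i\bigl[\Gamma(\ldots,e^N(t))-\bar\Gamma(x_i^N(t),a_i^N(t),e^N(t))\bigr]$ with $\bar\Gamma(x,a,\mu)\eqdef\int\Gamma(x,a,w,v,\mu)\,\zeta(dw,dv)$ still has conditional variance $O(1/N)$, while $\bar\Gamma$ \emph{is} jointly continuous in $(x,a,\mu)$ by dominated convergence from the fixed-noise continuity. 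Your joint-continuity argument then applies legitimately to $(m,\mu)\mapsto\int\bar\Gamma(\,\cdot\,,\mu)\,dm$ on $\P(\sX\times\sA)\times\P(\sX)$ with $(m,\mu)=(b^N(t),e^N(t))\to(b^{\bm}(t),\mu^{\bm}(t))$, closing the induction. In short: average the noise out \emph{before} invoking continuity, not after.
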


\begin{proof}
The proof can be done as in the proof of \cite[Proposition 4.4]{SaBaRaSIAM}. 
\end{proof}

In this paper, we are interested in deviations of empirical measures $(b^N(t))_{t\geq0}$ from their limits $(b^{\bm}(t))_{t\geq0}$. To this end, we establish the below large deviations principle, which is the main result of this paper. Before, we state the main result, let us recall the definition of LDP and contraction principle. Note that a function $I:\sE\rightarrow[0,\infty]$ on a Hausdorff space $\sE$ is called a \emph{rate function} if $\{z \in \sE: I(z) \leq M\}$ is  compact for each $M<\infty$.

\begin{definition}
A process $\{Z_N\}_{N\geq1}$ satisfies LDP on a Hausdorff space $\sE$ with a rate function $I:\sE\rightarrow[0,\infty]$ if for each closed subset $F \subset \sE$ and for each open subset $G \subset \sE$, we have
\begin{align}
\limsup_{N\rightarrow\infty} \frac{1}{N} \log \rP\{Z_N \in F\} &\leq -\inf_{z\in F} I(z) \,\, \text{(LDP upper bound)}\nonumber \\ 
\liminf_{N\rightarrow\infty} \frac{1}{N} \log \rP\{Z_N \in G\} &\geq -\inf_{z\in G} I(z) \,\, \text{(LDP lower bound)}.\nonumber
\end{align}
\end{definition}

\begin{theorem}[contraction principle]\label{contraction}
Let $\sX$ and $\sY$ be Hausdorff spaces, $I$ be a rate function on $\sX$, and $f$ is a continuous mapping from $\sX$ to $\sY$. Then, we have the following conclusions.
\begin{itemize}
\item[(a)] For each $y \in \sY$, $J(y) \eqdef \inf \{I(x): f(x)=y\}$ is a rate function on $\sY$.
\item[(b)] If a process $\{Z_N\}_{N\geq1}$ satisfies LDP on $\sX$ with a rate function $I:\sE\rightarrow[0,\infty]$, then the process  $\{f(Z_N)\}_{N\geq1}$ satisfies LDP on  $\sY$ with a rate function $J:\sY\rightarrow[0,\infty]$.
\end{itemize} 
\end{theorem}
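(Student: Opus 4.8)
The plan is to reduce everything to two elementary facts: that continuity of $f$ makes preimages of closed (resp.\ open) sets closed (resp.\ open), and that $I$ has compact sublevel sets. The analytic content sits entirely in part~(a), after which part~(b) follows almost formally. For part~(a), I would first record the identity
\[
\inf_{x \in f^{-1}(A)} I(x) \;=\; \inf_{y \in A} J(y) \qquad \text{for every } A \subseteq \sY,
\]
which is immediate on writing the right-hand side as the iterated infimum $\inf_{y\in A}\inf_{x : f(x)=y} I(x)$ and recognizing it as a single infimum over $\{x : f(x)\in A\} = f^{-1}(A)$. Taking $A=\sY$ and using $I\ge 0$ shows $J:\sY\to[0,\infty]$. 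It then remains to prove that $\{y : J(y)\le M\}$ is compact for each $M<\infty$, and the crux is the exact identity
\[
\{y \in \sY : J(y) \leq M\} \;=\; f\bigl(\{x \in \sX : I(x) \leq M\}\bigr),
\]
whose right-hand side, being the continuous image of a compact set, is compact. The inclusion $\supseteq$ is trivial; the content is $\subseteq$, i.e.\ that when $J(y)\le M<\infty$ the defining infimum is \emph{attained} by some $x$ with $f(x)=y$ and $I(x)\le M$.

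To establish this attainment I would invoke the finite intersection property. Since $\sY$ is Hausdorff, $\{y\}$ is closed, so the fiber $f^{-1}(\{y\})$ is closed; hence each
\[
K_n \eqdef f^{-1}(\{y\}) \cap \{x \in \sX : I(x) \leq J(y) + 1/n\}
\]
is the intersection of a closed set with a compact sublevel set of $I$, therefore compact, and is nonempty by the definition of the infimum. The $K_n$ are nested and decreasing, so $\bigcap_{n} K_n \neq \emptyset$; any $x$ in this intersection satisfies $f(x)=y$ and $I(x)\le J(y)$, which forces $I(x)=J(y)\le M$. This gives $\subseteq$ and finishes part~(a).

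Part~(b) then follows by pairing the preimage observation with the infimum identity. For closed $F\subseteq\sY$ the set $f^{-1}(F)$ is closed and $\{f(Z_N)\in F\}=\{Z_N\in f^{-1}(F)\}$; applying the LDP upper bound for $\{Z_N\}$ and the identity with $A=F$ yields $\limsup_{N}\tfrac1N\log\rP\{f(Z_N)\in F\}\le-\inf_{x\in f^{-1}(F)}I(x)=-\inf_{y\in F}J(y)$. The lower bound is verbatim the same with an open $G$ in place of $F$ and the LDP lower bound in place of the upper bound. I expect the main obstacle to be precisely the attainment step in part~(a): without it one only knows that $J(y)\le M$ is approximated within the sublevel sets of $I$, the clean identity with $f(\{I\le M\})$ breaks, and compactness of $\{J\le M\}$ is no longer guaranteed. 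It is the compactness of the level sets of $I$, together with the Hausdorff property of $\sY$, that makes the finite-intersection argument close.
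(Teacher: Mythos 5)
Your proof is correct: the identity $\{y: J(y)\le M\}=f(\{x: I(x)\le M\})$, justified via attainment of the infimum on the compact fibers $K_n$ (using that $\sX$, $\sY$ are Hausdorff so that compact sublevel sets intersected with the closed fiber $f^{-1}(\{y\})$ form a nested family of nonempty compacta), together with the preimage/infimum identity for part (b), is exactly the classical argument. The paper itself states the contraction principle without proof, as a standard result from the large deviations literature (cf.\ \cite{DuEl97}), so there is nothing in the paper to diverge from; your write-up supplies the standard textbook proof and has no gaps.
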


Now we can state the main result of this paper. Here, we suppose that $\P(\sX\times\sA)^{\infty}$ is endowed with the product topology, where $\P(\sX\times\sA)$ has weak convergence topology.  

\begin{theorem}\label{theorem2}
Under Assumption~\ref{assumption1}, empirical measures $\{(b^N(t))_{t\geq0}\}_{N\geq1}$ satisfy large deviations principle (LDP) on $\P(\sX\times\sA)^{\infty}$ with the following rate function:
\begin{align}
J(\bnu) \eqdef \inf\left\{ R\left(\Lambda \, \big|\,\Lambda_{\nu(0)}^{\bnu}\right): \Lambda \in \P(\sX^{\infty}\times\sA^{\infty}), \,\, \Lambda_t = \nu_t \, \forall \, t\geq0 \right\}, \nonumber
\end{align}
where $\bnu = (\nu_t)_{t\geq0}$, $R(\,\cdot\,|\,\cdot\,)$ is the relative entropy, 
\begin{align}
&\Lambda_{\nu(0)}^{\bnu}\left(dx(0),da(0),dx(1),da(1),\ldots\right) \eqdef \nu(0)(dx(0),da(0)) \nonumber \\
&\phantom{xxxxxxxxx}\bigotimes_{t=0}^{\infty}\bigg( p_t^{\nu_{t,\sX}}(dx(t+1)|x(t),a(t))\otimes\pi_{t+1}^{\bm}(da(t+1)|x(t+1))\bigg), \nonumber \\
&\nu(0)(dx(0),da(0)) \eqdef \mu^{\bm}(0)(dx(0))\otimes\pi_0^{\bm}(da(0)|x(0)),\nonumber 
\end{align}
 and $\Lambda_k \in \P(\sX\times\sA)$ is the $k^{th}$-marginal of $\Lambda$ on $\sX\times\sA$.
\end{theorem}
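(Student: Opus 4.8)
\emph{Strategy.} The plan is to realize the interacting particle system as a continuous image of the empirical measure of the i.i.d.\ driving data and then push a Sanov-type LDP through the contraction principle (Theorem~\ref{contraction}), following the route of Delarue--Lacker--Ramanan. First I would isolate the exogenous randomness: for each agent $i$, collect the initial state and all noises into a single source variable $\zeta_i\eqdef(x_i^N(0),(w_i^N(t))_{t\ge0},(v_i^N(t))_{t\ge0})$ taking values in the Polish space $\Omega\eqdef\sX\times\sW^{\infty}\times\sV^{\infty}$. By construction these are i.i.d.\ with common law $\rho\eqdef\mu^{\bm}(0)\otimes\bigotimes_{t\ge0}\xi_t\otimes\bigotimes_{t\ge0}\alpha_t$, so by Sanov's theorem the empirical measures $L^N\eqdef\frac1N\sum_{i=1}^N\delta_{\zeta_i}$ satisfy the LDP on $\P(\Omega)$ with rate $R(\,\cdot\,|\rho)$. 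I would invoke the setwise ($\tau$) version of Sanov's theorem; since the $ws$-topology---weak on the $\sX$-coordinate and setwise on the noise coordinates---is coarser than $\tau$, the LDP persists in the $ws$-topology, which is the one adapted to the map below.

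Next I would construct the contraction map $\Psi\colon\P(\Omega)\to\P(\sX\times\sA)^{\infty}$. Given $\Theta\in\P(\Omega)$, on a space carrying $\zeta=(\chi,(w_t),(v_t))\sim\Theta$ define random variables by the forward recursion $X(0)\eqdef\chi$, $A(t)\eqdef G_t(X(t),v_t)$, $X(t+1)\eqdef F_t(X(t),A(t),m_\Theta(t),w_t)$, where $m_\Theta(t)\eqdef\L_{\Theta}(X(t))$; because the coupling is causal, each $m_\Theta(t)$ is determined by $X(0),\dots,X(t)$ and no genuine fixed point is needed. Set $\Psi(\Theta)\eqdef(\L_{\Theta}(X(t),A(t)))_{t\ge0}$. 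Evaluating at $\Theta=L^N$ makes $\zeta$ uniform on $\{\zeta_i\}$, so $X(t)$ at $\zeta_i$ equals $x_i^N(t)$ and $m_{L^N}(t)=e^N(t)$; hence $\Psi(L^N)=(b^N(t))_{t\ge0}$, exactly the object of interest.

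The main obstacle is the $ws$-to-weak continuity of $\Psi$, which I would prove by induction on $t$. The base case is immediate since $g(\chi,G_0(\chi,v_0))$ is bounded and continuous in $\chi$ for each $v_0$ (Assumption~\ref{assumption1}), hence an admissible $ws$-test function. For the inductive step one writes, for $g\in C_b(\sX\times\sA)$, the integral $\int g(X(t+1),A(t+1))\,d\Theta$ as the integral of a function of $(X(t),A(t),w_t,v_{t+1})$ that depends on $\Theta$ only through the mean field $m_\Theta(t)$; the delicacy is that along a net $\Theta_\alpha\to\Theta$ both the integrating measure and the frozen argument $m_{\Theta_\alpha}(t)\to m_{\Theta}(t)$ move simultaneously. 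Continuity of $F_t$ in $(x,a,\mu)$ and of $G_{t+1}$ in $x$ (Assumption~\ref{assumption1}) makes the integrand continuous in the state variables and stable under variation of $m_\Theta(t)$, but closing this double limit---and verifying that the integrand remains an admissible $ws$-test function---is the technical heart, and is precisely where the setwise component of the topology (absorbing the merely measurable dependence on $w_t,v_{t+1}$) is indispensable. Granting continuity, Theorem~\ref{contraction} yields the LDP for $(b^N(t))_{t\ge0}=\Psi(L^N)$ with rate $\tilde J(\bnu)=\inf\{R(\Theta|\rho):\Psi(\Theta)=\bnu\}$.

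Finally I would identify $\tilde J$ with the stated $J$. Fix $\bnu$ and note that $\Psi(\Theta)=\bnu$ forces the mean-field flow to equal $\bnu_{\sX}\eqdef(\nu_{t,\sX})_{t\ge0}$, so along admissible $\Theta$ the recursion reduces to the frozen map $\Phi_{\bnu_{\sX}}\colon\Omega\to\sX^{\infty}\times\sA^{\infty}$ obtained by replacing $m_\Theta(t)$ with $\nu_{t,\sX}$. Pushing $\rho$ through $\Phi_{\bnu_{\sX}}$ produces exactly $\Lambda^{\bnu}_{\nu(0)}$, since $G_t(x,\cdot)_*\alpha_t=\pi^{\bm}_t(\,\cdot\,|x)$ and $F_t(x,a,\nu_{t,\sX},\cdot)_*\xi_t=p_t^{\nu_{t,\sX}}(\,\cdot\,|x,a)$. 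Writing $\Lambda\eqdef(\Phi_{\bnu_{\sX}})_*\Theta$, the data-processing inequality gives $R(\Lambda|\Lambda^{\bnu}_{\nu(0)})\le R(\Theta|\rho)$ with $\Lambda_t=\nu_t$, whence $\tilde J(\bnu)\ge J(\bnu)$; conversely, given any $\Lambda$ with $\Lambda_t=\nu_t$, gluing $\Lambda$ with the conditional noise kernel of $\rho$ given $\Phi_{\bnu_{\sX}}$ yields a $\Theta$ with $(\Phi_{\bnu_{\sX}})_*\Theta=\Lambda$, $\Psi(\Theta)=\bnu$, and $R(\Theta|\rho)=R(\Lambda|\Lambda^{\bnu}_{\nu(0)})$ by the chain rule, giving $\tilde J(\bnu)\le J(\bnu)$. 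This proves $\tilde J=J$ and completes the identification of the rate function.
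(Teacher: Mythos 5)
Your proposal is correct and its first half (isolating the i.i.d.\ driving data $(x_i^N(0),w_i^N,v_i^N)$, invoking Sanov's theorem in the setwise topology, realizing $b^N$ as the image of the source empirical measure under the forward-recursion map, and proving setwise-to-weak continuity by induction so the contraction principle applies) is essentially the paper's argument; your observation that $\Psi(\Theta)=\bnu$ forces the mean-field flow to freeze at $(\nu_{t,\sX})_{t\ge0}$ is exactly the paper's Lemma~\ref{lemma3}. Where you genuinely diverge is in identifying the abstract rate $\inf\{R(\Theta|\rho):\Psi(\Theta)=\bnu\}$ with the stated $J(\bnu)$. You do this directly: data processing under the frozen pathwise map $\Phi_{\bnu_{\sX}}$ (whose pushforward of $\rho$ is $\Lambda^{\bnu}_{\nu(0)}$) gives one inequality, and a gluing construction --- attaching to any admissible $\Lambda$ the regular conditional law of $\rho$ given $\Phi_{\bnu_{\sX}}$, with the chain rule showing no entropy is added --- gives the other. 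The paper instead introduces a decoupled ``simplified particle model'' with frozen measure flow, computes its rate function a second way (Sanov in the weak topology for the path-empirical measures $\tilde e^N$, then projection onto marginals), and concludes by uniqueness of rate functions. Your route is more self-contained and variational: it needs no second LDP, no second continuity proposition, and no appeal to the uniqueness theorem, at the price of handling the measure-theoretic details of the gluing (existence of the disintegration on a Polish space, restricting to $\Lambda\ll\Lambda^{\bnu}_{\nu(0)}$ when the entropy is finite, and checking that the glued $\Theta$ indeed satisfies $\Psi(\Theta)=\bnu$ via the same frozen-flow induction). The paper's route avoids any explicit construction of a near-optimal $\Theta$ but must establish Proposition~\ref{proposition4} and invoke two versions of Sanov's theorem. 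Both are sound; the one substantive caveat on your side is that the continuity of $\Psi$, which you rightly flag as the technical heart, is only sketched --- closing the double limit in the inductive step requires something like the paper's auxiliary functions $h_f^t$ together with a joint-convergence result for products of weakly and setwise converging measures.
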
 

Before we compare our result with the prior literature, let us analyze the rate function when the time horizon is $T=2$ instead of infinity. In this case, the rate function becomes 
$$
J(\nu_0,\nu_1,\nu_2) = \inf\left\{ R\left(\Lambda \, \big|\,\Lambda_{\nu(0)}^{\bnu}\right): \Lambda \in \P(\sX^{3}\times\sA^{3}), \,\, \Lambda_t = \nu_t \, \forall \, t=0,1,2 \right\}, \nonumber
$$
where $\Lambda_{\nu(0)}^{\bnu}$ is the joint distribution of $(x(0),a(0),x(1),a(1),x(2),a(2))$ in the infinite-population limit of the game under the measure flow $(\nu_{0,\sX},\nu_{1,\sX},\nu_{2,\sX})$, the policy $(\pi_0^{\bm},\pi_1^{\bm},\pi_2^{\bm})$, and the initial distribution $\mu^{\bm}(0)$. One can also prove that (see Proposition~\ref{proposition1}) 
$$
J(\nu_0,\nu_1,\nu_2) \geq R(\nu_0|\nu(0)) + R(\nu_1|\Gamma_0(\nu_0)) + R(\nu_2|\Gamma_1(\nu_1)),  
$$
where $\Gamma_t(\nu)(dx',da') = \int_{\sX\times\sA} \pi_{t+1}^{\bm}(da'|x') \, p_t^{\nu_{\sX}}(dx'|x,a)\, \nu(dx,da)$. Therefore, for $t=0,1$, if $\nu_{t+1}$ is quite different than the distribution of $(x(t+1),a(t+1))$ under the policy $\pi_{t+1}^{\bm}$ and transition probability $p_t^{\nu_{t,\sX}}$ when $(x(t),a(t)) \sim \nu_t$, then the rate function becomes large. Hence, convergence of empirical measures of state-action pairs  to such limiting distributions is highly unlikely.  

\section{Comparison with Prior Literature}\label{sec2}

In \cite{MoGu98,DaMo05}, authors analyze the following interacting particle system. Let $\sE$ be a Polish space. For each $t\geq0$, consider the nonlinear mapping 
$\Gamma_t: \P(\sE) \rightarrow \P(\sE)$ given by
$$
\Gamma_t(\nu(t))(\,\cdot\,) = \int_{\sE} \kappa_t^{\nu(t)}(\,\cdot\,|z) \, \nu(t)(dz) \eqdef \nu(t)\otimes\kappa_t^{\nu(t)},  
$$
for some $\kappa_t^{\nu}(\,\cdot\,|z): \sE \times \P(\sE) \rightarrow \P(\sE)$.
Let $\{\epsilon^N(t)\}_{t\geq0}$ be a Markov chain with the state space $\sE^N \eqdef \prod_{i=1}^N \sE$, the initial distribution $\bigotimes_{i=1}^N\nu_0$, and the transition probability 
\begin{align}
&\rP\left\{ \epsilon^N(t+1) \in dz_1^N(t+1),\ldots,dz_N^N(t+1) \,\big|\, \epsilon^N(t)\right\} \nonumber \\
&\phantom{xxxxxxxxxxxxxxxxx}= \bigotimes_{i=1}^N \frac{1}{N} \sum_{j=1}^N \kappa_t^{b^N(t)}(dz_i^N(t+1)\,|\, z_j^N(t)), \label{eq2}
\end{align}
where $b^N(t) \in \P(\sE)$ is the empirical distribution of the vector $\epsilon^N(t) \in \sE^N$. Note that the state-action dynamics in our game model is \emph{almost} the same with (\ref{eq2}). It is almost the same because, in our case, in place of (\ref{eq2}), we have the following dynamics:
\begin{align}\label{eq3}
&\rP\left\{ \epsilon^N(t+1) \in dz_1^N(t+1),\ldots,dz_N^N(t+1) \,\big|\, \epsilon^N(t)\right\} \nonumber \\
&\phantom{xxxxxxxxxxxxxxxxx}= \bigotimes_{i=1}^N \kappa_t^{b^N(t)}(dz_i^N(t+1)\,|\, z_i^N(t)), 
\end{align}
where $z_i^N(t) = (x_i^N(t),a_i^N(t)) \in \sX\times\sA \eqdef \sE$ and 
$$\kappa_t^{\nu}(dz'|z) = p_t^{\nu_{\sX}}(dx'|x,a)\otimes  \pi_{t+1}^{\bm}(da'|x'),$$ 
$z' = (x',a')$, and $z = (x,a)$. Because of this distinction, the analysis of our system is different than the one in (\ref{eq2}).

Let us consider the model in (\ref{eq2}) with $\kappa_t^{\nu}(dz'|z) = p_t^{\nu_{\sX}}(dx'|x,a) \otimes \pi_{t+1}^{\bm}(da'|x') $. In \cite{DaMo05}, under minorization and absolute continuity conditions on the nonlinear mapping $\Gamma_t$ (which are more restrictive than Assumption~\ref{assumption1}), it has been established that for any $T\geq0$, the empirical process $\{(b^N(t))_{t=0}^T\}_{N\geq1}$ satisfies LDP on $\P(\sE)^T$ with the rate function 
$$
V_T(\gamma_0,\ldots,\gamma_T) = R(\gamma_0|\nu(0)) + \sum_{t=1}^T R(\gamma_t|\Gamma_{t-1}(\gamma_{t-1})).
$$ 
Note that under our model (\ref{eq3}), by Theorem~\ref{theorem2} and Theorem~\ref{contraction} (contraction principle), the same empirical process $\{(b^N(t))_{t=0}^T\}_{N\geq1}$ satisfies the LDP on $\P(\sE)^T$ with the rate function 
$$
J_T(\gamma_0,\ldots,\gamma_T) = \inf\left\{R(\Lambda|\Lambda_{\nu(0)}^{\bf \gamma}): \Lambda \in \P(\sE^T), \, \Lambda_t = \gamma_t \, \forall t=0,\ldots,T\right\},
$$     
where 
\begin{align}
&\Lambda_{\nu(0)}^{\bf \gamma}(dx(0),da(0),\ldots,dx(T),da(T))\eqdef \nu(0)(dx(0),da(0))  \nonumber \\
&\otimes \kappa_0^{\gamma_0}(d x(1),d a(1)|x(0),a(0))\otimes\ldots\otimes\kappa_{T-1}^{\gamma_{T-1}}(dx(T),da(T)|x(T-1),a(T-1)) \nonumber 
\end{align}
and $\Lambda_k \in \P(\sE)$ is the $k^{th}$-marginal of $\Lambda$ on $\sE$.
In \cite[p. 181]{DaMo05}, although $J_T$ could not be characterized, it has been suggested that $J_T$ should be greater than $V_T$. This is indeed the case as will be shown below. 
Before we state the result, let us introduce the following notation. For each $t\geq0$, define 
$$
\theta_{t,t+1}(dz(t+1),dz(t)) \eqdef \kappa_t^{\gamma_t}(dz(t+1)|z(t)) \, \gamma_t(dz(t)). 
$$
If we disintegrate $\theta_{t,t+1}(dz(t+1),dz(t))$ in the reverse order, we obtain the following stochastic kernel
$$
\bar{\kappa}_t^{\gamma_t}(dz(t)|z(t+1)) \eqdef \theta_{t,t+1}(dz(t)|z(t+1)),
$$
which obviously depends on $\gamma_t$. 

\begin{proposition}\label{proposition1}
For any $T\geq0$, we have $J_T \geq V_T$. Indeed, we have the following relation:
\small
$$
J_T(\gamma_0,\ldots,\gamma_T) = V_T(\gamma_0,\ldots,\gamma_T) + \inf\left\{R(\Lambda|\bar{\Lambda}^{\bf \gamma}): \Lambda \in \P(\sE^T), \, \Lambda_t = \gamma_t \, \forall t=0,\ldots,T\right\}, \nonumber 
$$
\normalsize
where 
\begin{align}
&\bar{\Lambda}^{\bf \gamma}(dx(T),da(T),\ldots,dx(0),da(0)) \eqdef \gamma(T)(dx(T),da(T)) \nonumber \\
&\otimes \bar{\kappa}_{T-1}^{\gamma(T-1)}(dx(T-1),a(T-1)|x(T),a(T))\otimes\ldots\otimes\bar{\kappa}_0^{\gamma_0}(dx(0),da(0)|dx(1),da(1)).\nonumber 
\end{align} 
\end{proposition}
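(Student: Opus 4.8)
The plan is to deduce the full identity from a single change-of-reference-measure computation; the announced inequality $J_T \ge V_T$ is then immediate from nonnegativity of relative entropy. Throughout, abbreviate $P \eqdef \Lambda_{\nu(0)}^{\bf \gamma}$ for the forward path law on $\sE^T$ and $Q \eqdef \bar{\Lambda}^{\bf \gamma}$ for the backward one. The guiding observation is that, although $P$ is assembled forward in time from $\nu(0)$ through the kernels $\kappa_t^{\gamma_t}$ and $Q$ is assembled backward in time from $\gamma_T$ through the reverse kernels $\bar{\kappa}_t^{\gamma_t}$, both are built from the \emph{same} pairwise joints $\theta_{t,t+1}$. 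I therefore expect the Radon--Nikodym derivative $dQ/dP$ to telescope, with the transition factors cancelling and only one factor per time coordinate surviving.

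Concretely, I would first record the two disintegrations of $\theta_{t,t+1}$, forward and backward:
\begin{align}
\theta_{t,t+1}\big(dz(t+1),dz(t)\big) &= \kappa_t^{\gamma_t}\big(dz(t+1)\,|\,z(t)\big)\,\gamma_t\big(dz(t)\big) \nonumber \\
&= \bar{\kappa}_t^{\gamma_t}\big(dz(t)\,|\,z(t+1)\big)\,\Gamma_t(\gamma_t)\big(dz(t+1)\big), \nonumber
\end{align}
the second being exactly the defining property of $\bar{\kappa}_t^{\gamma_t}$ together with the fact that the time-$(t+1)$ marginal of $\theta_{t,t+1}$ is $\Gamma_t(\gamma_t)$. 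Substituting these into the product forms of $Q$ and $P$, the kernel factors cancel and I obtain
\[
\frac{dQ}{dP}\big(z(0),\ldots,z(T)\big) = \frac{d\gamma_0}{d\nu(0)}\big(z(0)\big)\,\prod_{s=1}^{T}\frac{d\gamma_s}{d\Gamma_{s-1}(\gamma_{s-1})}\big(z(s)\big),
\]
which is well defined precisely when $\gamma_0 \ll \nu(0)$ and $\gamma_s \ll \Gamma_{s-1}(\gamma_{s-1})$ for each $s$, i.e. precisely when $V_T(\gamma_0,\ldots,\gamma_T) < \infty$. The point is that $\log(dQ/dP)$ is a sum of functions, each depending on a single coordinate $z(s)$.

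Next I would fix any admissible $\Lambda \in \P(\sE^T)$, meaning $\Lambda_t = \gamma_t$ for all $t$, with $R(\Lambda|P) < \infty$, and apply the chain rule for relative entropy along $\Lambda \ll Q \ll P$:
\[
R\big(\Lambda\,\big|\,P\big) = R\big(\Lambda\,\big|\,Q\big) + \int_{\sE^T}\log\frac{dQ}{dP}\,d\Lambda.
\]
Because $\log(dQ/dP)$ splits over coordinates and the marginals of $\Lambda$ are pinned to $\gamma_s$, the last integral is independent of $\Lambda$ and evaluates, term by term, to
\[
\int_{\sE^T}\log\frac{dQ}{dP}\,d\Lambda = R\big(\gamma_0\,\big|\,\nu(0)\big) + \sum_{s=1}^{T} R\big(\gamma_s\,\big|\,\Gamma_{s-1}(\gamma_{s-1})\big) = V_T(\gamma_0,\ldots,\gamma_T).
\]
Taking the infimum over all admissible $\Lambda$ in the resulting identity $R(\Lambda|P) = R(\Lambda|Q) + V_T$ gives $J_T = V_T + \inf\{R(\Lambda|Q) : \Lambda_t = \gamma_t \ \forall t\}$, which is the claim; and $J_T \ge V_T$ follows since the remaining infimum of a relative entropy is nonnegative.

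The conceptual content is the telescoping above, so I expect the main obstacle to be the measure-theoretic bookkeeping rather than any new idea: securing a joint density for $\theta_{t,t+1}$ with respect to $\gamma_t \otimes \Gamma_t(\gamma_t)$ (available from disintegration on the Polish space $\sE = \sX \times \sA$) so that the formal quotient $dQ/dP$ is a genuine Radon--Nikodym derivative, and verifying the absolute-continuity chain $\Lambda \ll Q \ll P$ needed for the chain rule. Finally I would treat the degenerate case $V_T = \infty$ separately; here the cleanest route is the forward chain-rule decomposition of $R(\Lambda|P)$, where, writing $P$ as the Markov chain with initial law $\nu(0)$ and transitions $\kappa_t^{\gamma_t}$, joint convexity of relative entropy and the data-processing inequality under the maps $\Gamma_t$ give $R(\Lambda|P) \ge V_T$ for every admissible $\Lambda$. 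This forces $J_T = \infty$ whenever $V_T = \infty$, so the identity holds trivially with both sides infinite, and it also confirms $J_T \ge V_T$ in full generality.
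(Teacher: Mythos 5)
Your argument is correct in substance, but it reaches the identity by a genuinely different route than the paper. The paper never writes down a likelihood ratio: it applies the two-stage chain rule for relative entropy iteratively, at step $s$ decomposing both $\Lambda$ and the current reference measure along the time-$s$ coordinate, which extracts the term $R(\gamma_s|\Gamma_{s-1}(\gamma_{s-1}))$ and simultaneously flips the forward factor $\gamma_{s-1}\otimes\kappa_{s-1}^{\gamma_{s-1}}=\theta_{s-1,s}$ into its backward disintegration $\Gamma_{s-1}(\gamma_{s-1})\otimes\bar{\kappa}_{s-1}^{\gamma_{s-1}}$; after $T$ steps the reference measure has become $\bar{\Lambda}^{\bf \gamma}$ and the identity is read off (the paper carries this out only for $T=3$). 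You instead compute $d\bar{\Lambda}^{\bf \gamma}/d\Lambda_{\nu(0)}^{\bf \gamma}$ in closed form via the same cancellation of the joints $\theta_{t,t+1}$, observe that its logarithm is a sum of single-coordinate functions and hence $\Lambda$-independent on the fiber $\{\Lambda_t=\gamma_t\}$, and apply the chain rule once. Your version is uniform in $T$ and makes the structural reason for the identity more transparent, but it buys this at the cost of extra measure-theoretic care that the paper's route avoids entirely: the iterated chain rule of \cite[Theorem C.3.1]{DuEl97} is valid with values in $[0,\infty]$, so the paper needs no case split, whereas you must separately verify $\Lambda\ll\bar{\Lambda}^{\bf \gamma}\ll\Lambda_{\nu(0)}^{\bf \gamma}$ and handle $V_T=\infty$ by hand. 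Two points to tighten: (i) the Radon--Nikodym derivative is well defined exactly when the absolute continuities $\gamma_0\ll\nu(0)$ and $\gamma_s\ll\Gamma_{s-1}(\gamma_{s-1})$ hold, which is strictly weaker than $V_T<\infty$, so your ``i.e.'' is not an equivalence (harmless here, since that regime is absorbed by your degenerate-case bound); (ii) the claim $R(\Lambda|\Lambda_{\nu(0)}^{\bf \gamma})\geq V_T$ for all admissible $\Lambda$ does follow, but the clean justification is the chain rule $R(\Lambda_{0:s}|P_{0:s})=R(\Lambda_{0:s-1}|P_{0:s-1})+\int R(\Lambda_s^{z(0:s-1)}|\kappa_{s-1}^{\gamma_{s-1}}(\cdot|z(s-1)))\,d\Lambda$ followed by joint convexity of relative entropy (the mixture of the kernels under $\Lambda_{s-1}=\gamma_{s-1}$ being exactly $\Gamma_{s-1}(\gamma_{s-1})$); ``data processing under $\Gamma_t$'' as stated does not quite apply, since the marginal of $\Lambda_{\nu(0)}^{\bf \gamma}$ at time $s$ is not $\gamma_s$ and its two-step marginals are not the $\theta_{s-1,s}$. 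Once you invoke that chain-rule-plus-convexity step, though, you are essentially reproducing the paper's argument, which suggests running the whole proof that way and dispensing with the likelihood ratio.
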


\begin{proof}
We give the proof for $T=3$ and leave the general case as an exercise to the reader. Let $\Lambda \in \P(\sE^3)$ with $\Lambda_t = \gamma_t$ for all $t=0,1,2,3$. By repeatedly applying the chain rule \cite[Theorem C.3.1]{DuEl97} for relative entropy, we obtain 
\begin{align}
&R(\Lambda|\Lambda_{\nu(0)}^{\bf \gamma}) \nonumber \\
&= R(\gamma_0|\nu(0)) + \int_{\sE} R(\Lambda^{z(0)}|\kappa_0^{\gamma_0}(\,\cdot\,|z(0))\otimes\kappa_1^{\gamma_1}\otimes\kappa_{2}^{\gamma_{2}}) \, \gamma_0(dz(0)) \nonumber \\
&= R(\gamma_0|\nu(0)) + R(\Lambda|\gamma_0 \otimes \kappa_0^{\gamma_0}\otimes\kappa_1^{\gamma_1}\otimes\kappa_{2}^{\gamma_{2}}) \nonumber \\
&= R(\gamma_0|\nu(0)) + R(\gamma_1|\Gamma_0(\gamma_0)) \nonumber \\
&\phantom{xxxxx}+ \int_{\sE} R(\Lambda^{z(1)}|\kappa_1^{\gamma_1}(\,\cdot\,|z(1))\otimes\bar{\kappa}_0^{\gamma_0}(\,\cdot\,|z(1))\otimes\kappa_2^{\gamma_2}) \, \gamma_1(dz(1)) \nonumber \\
&= R(\gamma_0|\nu(0)) + R(\gamma_1|\Gamma_0(\gamma_0)) + R(\Lambda|\gamma_1\otimes(\kappa_1^{\gamma_1}\otimes\bar{\kappa}_0^{\gamma_0})\otimes\kappa_2^{\gamma_2}) \nonumber \\
&= R(\gamma_0|\nu(0)) + R(\gamma_1|\Gamma_0(\gamma_0)) + R(\gamma_2|\Gamma_1(\gamma_1)) \nonumber \\
&\phantom{xxxxx}+ \int_{\sE} R(\Lambda^{z(2)}|\bar{\kappa}_1^{\gamma_1}(\,\cdot\,|z(2))\otimes\kappa_2^{\gamma_2}(\,\cdot\,|z(2))\otimes\bar{\kappa}_0^{\gamma_0}) \, \gamma_2(dz(2)) \nonumber \\
&= R(\gamma_0|\nu(0)) + R(\gamma_1|\Gamma_0(\gamma_0)) + R(\gamma_2|\Gamma_1(\gamma_1))+ R(\Lambda|\gamma_2\otimes(\bar{\kappa}_1^{\gamma_1}\otimes\kappa_2^{\gamma_2})\otimes\bar{\kappa}_0^{\gamma_0}) \nonumber \\
&= R(\gamma_0|\nu(0)) + R(\gamma_1|\Gamma_0(\gamma_0)) + R(\gamma_2|\Gamma_1(\gamma_1)) + R(\gamma_3|\Gamma_2(\gamma_2)) \nonumber \\
&\phantom{xxxxx}+ \int_{\sE} R(\Lambda^{z(3)}|\bar{\kappa}_2^{\gamma_2}(\,\cdot\,|z(3))\otimes\bar{\kappa}_1^{\gamma_1}\otimes\bar{\kappa}_0^{\gamma_0}) \, \gamma_3(dz(3)) \nonumber \\
&= R(\gamma_0|\nu(0)) + R(\gamma_1|\Gamma_0(\gamma_0)) + R(\gamma_2|\Gamma_1(\gamma_1))+ R(\gamma_3|\Gamma_2(\gamma_2)) \nonumber \\
&\phantom{xxxxx}+R(\Lambda|\gamma_3\otimes\bar{\kappa}_2^{\gamma_2}\otimes\bar{\kappa}_1^{\gamma_1}\otimes\bar{\kappa}_0^{\gamma_0}). \nonumber 
\end{align}
By taking the infimum of the above identity with respect to $\Lambda$, we complete the proof for $T=3$.
\end{proof}

Consider again the model in (\ref{eq2}) with $\kappa_t^{\nu}(dz'|z) = p_t^{\nu_{\sX}}(dx'|x,a) \otimes \pi_{t+1}^{\bm}(da'|x') $. In \cite{MoGu98}, under continuity of $\{\Gamma_t\}_{t\geq0}$ and a condition that ensures the exponential tightness of $\{b^N(t)\}_{N\geq1}$ for all $t\geq0$ (which are again more restrictive than Assumption~\ref{assumption1}), it has been proved that for all $t\geq0$, the empirical measures $\{b^N(t)\}_{N\geq1}$ satisfy the LDP on $\P(\sE)$ with the rate function 
$$
V_t(\gamma) = \sup_{g \in C_b(\sE)} \left\{\int_{\sE} g(z) \, \gamma(dz) + \inf_{\nu \in \P(\sE)} \left( V_{t-1}(\nu) - \log \int_{\sE} e^{g(z)} \, \Gamma_{t-1}(\nu)(dz) \right) \right\}.
$$
Here, $V_0(\gamma) = R(\gamma|\nu(0))$. If we consider our model (\ref{eq3}) instead of (\ref{eq2}), then by Theorem~\ref{theorem2} and Theorem~\ref{contraction} (contraction principle), for each $t\geq0$, the same empirical measures $\{b^N(t)\}_{N\geq1}$ satisfy the LDP on $\P(\sE)$ with the rate function 
$$
J_t(\gamma) = \inf\left\{R(\Lambda|\Lambda_{\nu(0)}^{\Lambda}): \Lambda \in \P(\sE^t), \, \Lambda_t = \gamma\right\},
$$     
where 
\begin{align}
&\Lambda_{\nu(0)}^{\Lambda}(dx(0),da(0),\ldots,dx(t),da(t))\eqdef \nu(0)(dx(0),da(0))  \nonumber \\
&\phantom{xx}\otimes \kappa_0^{\Lambda_0}(d x(1),d a(1)|x(0),a(0))\otimes\ldots\otimes\kappa_{t-1}^{\Lambda_{t-1}}(dx(t),da(t)|x(t-1),a(t-1)) \nonumber 
\end{align}
In this case, $J_t$ is again greater than $V_t$ for each $t\geq0$ as will be shown below. 

\begin{proposition}\label{proposition2}
For any $t\geq0$, we have $J_t \geq V_t$.
\end{proposition}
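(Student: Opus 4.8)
The plan is to realize $J_t$ as the image of the path‑level rate function under a coordinate projection, bound it from below through Proposition~\ref{proposition1}, and then reconcile the resulting contracted additive rate with the recursive formula defining $V_t$.

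First I would note that $J_t$ is exactly the contraction onto the last time coordinate of the path‑level rate function appearing just before Proposition~\ref{proposition1}. Writing $\Lambda^{[\beta_0,\ldots,\beta_t]}_{\nu(0)}$ for the reference law built from the kernels $\kappa^{\beta_s}_s$, that path rate is
$$
J^{\mathrm{path}}_t(\gamma_0,\ldots,\gamma_t)=\inf\bigl\{R(\Lambda|\Lambda^{[\gamma_0,\ldots,\gamma_t]}_{\nu(0)}):\Lambda\in\P(\sE^t),\ \Lambda_s=\gamma_s\ \forall s\bigr\}.
$$
For any $\Lambda$ with $\Lambda_t=\gamma$ one may set $\gamma_s:=\Lambda_s$, in which case $\Lambda^{[\Lambda_0,\ldots,\Lambda_t]}_{\nu(0)}=\Lambda^{\Lambda}_{\nu(0)}$, and conversely every admissible pair $((\gamma_s)_{s<t},\Lambda)$ arises this way. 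Hence the double infimum collapses and
$$
J_t(\gamma)=\inf_{\gamma_0,\ldots,\gamma_{t-1}}J^{\mathrm{path}}_t(\gamma_0,\ldots,\gamma_{t-1},\gamma).
$$

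Next I would invoke Proposition~\ref{proposition1}, which gives $J^{\mathrm{path}}_t\ge V^{\mathrm{path}}_t$, where $V^{\mathrm{path}}_t(\gamma_0,\ldots,\gamma_t)=R(\gamma_0|\nu(0))+\sum_{s=1}^t R(\gamma_s|\Gamma_{s-1}(\gamma_{s-1}))$ is the additive (Dai--Menaldi) path rate. Combining with the previous display gives
$$
J_t(\gamma)\ \ge\ \inf_{\gamma_0,\ldots,\gamma_{t-1}}V^{\mathrm{path}}_t(\gamma_0,\ldots,\gamma_{t-1},\gamma)\ =:\ \hat V_t(\gamma),
$$
and splitting the infimum over $\gamma_{t-1}$ yields the recursion $\hat V_t(\gamma)=\inf_{\nu\in\P(\sE)}\bigl[\hat V_{t-1}(\nu)+R(\gamma|\Gamma_{t-1}(\nu))\bigr]$ with $\hat V_0=R(\,\cdot\,|\nu(0))=V_0$. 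Thus it suffices to prove $\hat V_t\ge V_t$ for every $t$, which I would establish by induction using the Donsker--Varadhan variational representation $R(\gamma|\rho)=\sup_{g\in C_b(\sE)}\{\int g\,d\gamma-\log\int e^{g}\,d\rho\}$: assuming $\hat V_{t-1}\ge V_{t-1}$, for fixed $g$ and $\nu$ I lower‑bound $R(\gamma|\Gamma_{t-1}(\nu))\ge\int g\,d\gamma-\log\int e^{g}\,d\Gamma_{t-1}(\nu)$, add $\hat V_{t-1}(\nu)$, take $\inf_\nu$, apply the induction hypothesis, and finally take $\sup_g$; the right‑hand side is then precisely the recursion defining $V_t$.

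The main obstacle is exactly this reconciliation step, namely passing from the contracted infimal‑convolution rate $\hat V_t$ to the dual/Legendre recursion $V_t$. The point that makes it go through cleanly is that only the one‑sided inequality $\hat V_t\ge V_t$ is required, so I can use just the lower‑bound half of the Donsker--Varadhan formula together with monotonicity of $\inf_\nu$, avoiding any minimax interchange (which would be needed for equality and would demand extra convexity--compactness hypotheses). An alternative, more conceptual route is to observe that $\hat V_t$ and $V_t$ are both rate functions governing $b^N(t)$ under the model~(\ref{eq2}) and to invoke uniqueness of rate functions; I would nonetheless prefer the analytic induction above, since it does not require simultaneously invoking the two distinct sets of LDP hypotheses from the cited works.
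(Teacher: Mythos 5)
Your proof is correct, but it is organized quite differently from the paper's. The paper proves $J_t \ge V_t$ by a single direct induction: at step $k+1$ it applies the chain rule for relative entropy to $R(\Lambda\,|\,\Lambda_{\nu(0)}^{\Lambda})$ to peel off $R(\Lambda_0^k\,|\,\Lambda_{\nu(0),0}^{\Lambda,k}) + R(\gamma\,|\,\Gamma_k(\Lambda_k))$, drops the nonnegative remainder, recognizes the infimum of the first term as $J_k(\Lambda_k)$, and then concludes with the Donsker--Varadhan representation, the one-sided $\inf\sup\ge\sup\inf$ swap, and the induction hypothesis $J_k\ge V_k$. You instead factor the argument through Proposition~\ref{proposition1}: you first identify $J_t(\gamma)$ as $\inf_{\gamma_0,\ldots,\gamma_{t-1}}J^{\mathrm{path}}_t(\gamma_0,\ldots,\gamma_{t-1},\gamma)$ --- this collapse of the double infimum is valid, since on the feasible set the reference measure built from $(\gamma_s)$ coincides with $\Lambda_{\nu(0)}^{\Lambda}$ --- then invoke Proposition~\ref{proposition1} to get $J_t\ge \hat V_t$, where $\hat V_t$ is the contracted additive rate satisfying the infimal-convolution recursion $\hat V_t(\gamma)=\inf_{\nu}\bigl[\hat V_{t-1}(\nu)+R(\gamma\,|\,\Gamma_{t-1}(\nu))\bigr]$, and finally run a separate induction $\hat V_t\ge V_t$ whose inductive step is essentially the same Donsker--Varadhan/one-sided-minimax computation the paper performs on $J_t$ directly. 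Your route buys the explicit intermediate bound $J_t\ge\hat V_t\ge V_t$, which is sharper and more interpretable than the bare conclusion, and it delegates all chain-rule bookkeeping to Proposition~\ref{proposition1}. The cost is that you rely on Proposition~\ref{proposition1} for arbitrary $T$, whereas the paper only writes out the case $T=3$ and leaves the general case as an exercise; your argument is complete only once that exercise is actually carried out, while the paper's direct induction is self-contained in this respect.
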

\begin{proof}
For $t=0$, $J_0=V_0$. Hence the claim holds for $t=0$. Suppose that it is true for all $t\leq k$ and consider $k+1$. Let $\Lambda \in \P(\sE^{k+1})$ with $\Lambda_{k+1} = \gamma$. By chain rule \cite[Theorem C.3.1]{DuEl97} for relative entropy, we can write 
\begin{align}
R(\Lambda|\Lambda_{\nu(0)}^{\Lambda}) &= R(\Lambda_0^k|\Lambda_{\nu(0),0}^{\Lambda,k}) + R(\Lambda|\Lambda_0^k\otimes\kappa_k^{\Lambda_k}) \nonumber \\
&= R(\Lambda_0^k|\Lambda_{\nu(0),0}^{\Lambda,k}) + R(\gamma|\Gamma_k(\Lambda_k)) + R(\Lambda|\gamma\otimes\bar{\kappa}_k^{\Lambda_0^k}), \nonumber 
\end{align}
where $\Lambda_0^k$ and $\Lambda_{\nu(0),0}^{\Lambda,k}$ are marginals of $\Lambda_0$ and $\Lambda_{\nu(0)}^{\Lambda}$ on $\sE^k$, and 
$$
(\Lambda_0^k\otimes\kappa_k^{\Lambda_k})(d(z(0),\ldots,z(k))\,|\,z(k+1)) \eqdef \bar{\kappa}_k^{\Lambda_0^k}(d(z(0),\ldots,z(k))\,|\,z(k+1)).
$$ 
By Donsker-Varadhan variational formula \cite[Lemma 1.4.3]{DuEl97}, we then have
\begin{align}
&J_{k+1}(\gamma) = \inf\left\{R(\Lambda|\Lambda_{\nu(0)}^{\Lambda}): \Lambda \in \P(\sE^{k+1}), \, \Lambda_{k+1} = \gamma\right\} \nonumber \\
&= \inf\left\{R(\Lambda_0^k|\Lambda_{\nu(0),0}^{\Lambda,k})+R(\gamma|\Gamma_k(\Lambda_k))+R(\Lambda|\gamma\otimes\bar{\kappa}_k^{\Lambda_0^k}): \Lambda \in \P(\sE^{k+1}), \, \Lambda_{k+1} = \gamma\right\} \nonumber \\
&\geq  \inf\left\{R(\Lambda_0^k|\Lambda_{\nu(0),0}^{\Lambda,k})+R(\gamma|\Gamma_k(\Lambda_k)): \Lambda \in \P(\sE^{k+1}), \, \Lambda_{k+1} = \gamma\right\} \nonumber \\
&= \inf_{\substack{\Lambda \in \P(\sE^{k+1}) \\ \Lambda_{k+1} = \gamma}}\left\{R(\Lambda_0^k|\Lambda_{\nu(0),0}^{\Lambda,k})+\sup_{g \in C_b(\sE)}\left(\int_{\sE} g(z) \, \gamma(dz)-\log \int_{\sE} e^{g(z)} \, \Gamma_k(\Lambda_k)(dz)\right)\right\} \nonumber \\
&\geq \sup_{g \in C_b(\sE)} \left\{\int_{\sE} g(z) \, \gamma(dz)+ \inf_{\substack{\Lambda \in \P(\sE^{k+1}) \\ \Lambda_{k+1} = \gamma}}\left(R(\Lambda_0^k|\Lambda_{\nu(0),0}^{\Lambda,k})-\log \int_{\sE} e^{g(z)} \, \Gamma_k(\Lambda_k)(dz)\right)\right\} \nonumber \\
&= \sup_{g \in C_b(\sE)} \left\{\int_{\sE} g(z) \, \gamma(dz)+ \inf_{\nu \in \P(\sE)}\left(J_k(\nu)-\log \int_{\sE} e^{g(z)} \, \Gamma_k(\nu)(dz)\right)\right\} \nonumber \\
&\geq \sup_{g \in C_b(\sE)} \left\{\int_{\sE} g(z) \, \gamma(dz)+ \inf_{\nu \in \P(\sE)}\left(V_k(\nu)-\log \int_{\sE} e^{g(z)} \, \Gamma_k(\nu)(dz)\right)\right\} \nonumber\\
&= V_{k+1}(\nu),\nonumber 
\end{align}
where the last inequality follows from induction hypothesis.
Hence the claim also holds for $k+1$. This completes the proof.
\end{proof}

Proposition~\ref{proposition1}  and Proposition~\ref{proposition2} will not be used to prove the main result of the paper. However, these results establish the distinction between interacting particle systems that arise from the state-action dynamics of mean-field games and interacting particle systems that were studied in prior  literature. In particular, they show that decline of large deviations probabilities of empirical measures from limiting distributions is faster in our setting.


\section{Proof of Large Deviations Principle}\label{sec3}

To prove LDP in Theorem~\ref{theorem2}, we first use the elegant method of Delarue, Lacker, and Ramanan \cite[Section 6.3.1]{DeLaRa20}, which was suggested to prove the LDP for the continuous-time mean-field games under the common noise. Here, the idea is to transfer the LDP, satisfied by the empirical measure of initial states and noise variables due to the Sanov's theorem, to empirical measures of states.

To this end, let us define 
$
w_i^N \eqdef \{w_i^N(t)\}_{t\geq0} \,\, \text{and} \,\, v_i^N \eqdef \{v_i^N(t)\}_{t\geq0}. 
$
Note that random elements $\{(x_i^N(0),w_i^N,v_i^N)\}_{i=1}^N$ are i.i.d. with a common distribution
$
\mu(0) \bigotimes_{t=0}^{\infty} \xi_t \bigotimes_{t=0}^{\infty} \alpha_t \eqdef \mu(0) \otimes \Theta_w \otimes \Theta_v. 
$
Let us define the empirical measure of the above process:
$$
Q^N \eqdef \frac{1}{N} \sum_{i=1}^N \delta_{(x_i^N(0),w_i^N,v_i^N)}. 
$$
Let us endow the set of probability measures $\P(\sX \times \sW^{\infty} \times \sV^{\infty})$ with setwise convergence topology. Then, by Sanov's Theorem in setwise convergence topology \cite[Theorem 1.1]{Dea94}, $\{Q^N\}_{N\geq1}$ satisfies LDP on $\P(\sX \times \sW^{\infty} \times \sV^{\infty})$ with the rate function $R(\,\cdot\,|\mu(0)\times\Theta_w\times\Theta_v)$. Our first step is to transfer this LDP to empirical measures $\{(b^N(t))_{t\geq0}\}_{N\geq1}$ via the contraction principle. To this end, let us define the function 
$$
\Phi: \P(\sX\times\sW^{\infty}\times\sV^{\infty}) \rightarrow \P(\sX\times\sA)^{\infty}
$$  
as follows. Given any $Q \in \P(\sX\times\sW^{\infty}\times\sV^{\infty})$, let 
\begin{align}
\left(x(0),w(0),\ldots,v(0),\ldots\right) \sim Q \label{eq4}
\end{align}
and define recursively the following random elements:
\begin{align}
x(t+1) &= F_t\left(x(t),a(t),\L\{x(t)\},w(t)\right) \label{eq5}\\
a(t) &= G_t(x(t),v(t)) \label{eq6}
\end{align}
for $t\geq0$. Then, we define $\Phi(Q) \eqdef \left(\L\{x(t),a(t)\}\right)_{t\geq0}$. The following lemma states that the image of $Q^N$ under $\Phi$ is $(b^N(t))_{t\geq0} \eqdef b^N$.

\begin{lemma}\label{lemma1}
For each $N\geq1$, $\Phi(Q^N) = b^N$.
\end{lemma}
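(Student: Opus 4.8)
The plan is to prove this as a \emph{pathwise} identity: I would fix an arbitrary realization of the $N$-agent game, so that the states $x_i^N(0)$ and the full noise paths $w_i^N,v_i^N$ become deterministic values, and $Q^N$ becomes a fixed, finitely supported element of $\P(\sX\times\sW^{\infty}\times\sV^{\infty})$. The crucial observation is that drawing $(x(0),w(0),\ldots,v(0),\ldots)\sim Q^N$ in the definition of $\Phi$ amounts, on an auxiliary probability space independent of the game, to drawing an index $I$ uniformly from $\{1,\ldots,N\}$ and setting $x(0)=x_I^N(0)$, $w(t)=w_I^N(t)$, $v(t)=v_I^N(t)$ for all $t$. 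Thus the whole recursion \eqref{eq5}--\eqref{eq6} defining $\Phi(Q^N)$ is driven by this single uniform index, and I can compare it term by term with the finite-agent dynamics.

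First I would run an induction on $t$ establishing simultaneously that (i) $x(t)=x_I^N(t)$, (ii) $a(t)=a_I^N(t)$, and, most importantly, (iii) $\L\{x(t)\}=e^N(t)$. For the base case $t=0$: (i) holds by construction; (iii) holds because $I$ is uniform, so $\L\{x_I^N(0)\}=\frac{1}{N}\sum_{i=1}^N\delta_{x_i^N(0)}=e^N(0)$; and (ii) holds since $a(0)=G_0(x(0),v(0))=G_0(x_I^N(0),v_I^N(0))=a_I^N(0)$, matching the game's action rule $a=G_t(x,v)$. For the inductive step, assuming (i)--(iii) at time $t$, the update \eqref{eq5} reads $x(t+1)=F_t\big(x(t),a(t),\L\{x(t)\},w(t)\big)=F_t\big(x_I^N(t),a_I^N(t),e^N(t),w_I^N(t)\big)$, which is exactly the finite-agent transition, so $x(t+1)=x_I^N(t+1)$; then (iii) at $t+1$ follows again from uniformity of $I$, and (ii) at $t+1$ from the action rule as before. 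Hence $\L\{x(t),a(t)\}=\frac{1}{N}\sum_{i=1}^N\delta_{(x_i^N(t),a_i^N(t))}=b^N(t)$ for every $t$, and therefore $\Phi(Q^N)=(\L\{x(t),a(t)\})_{t\geq0}=b^N$, as claimed.

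The step I expect to carry the whole argument is the identification (iii): the mean-field interaction term $e^N(t)$ appearing inside $F_t$ in the finite-population dynamics must be reproduced by the law $\L\{x(t)\}$ inserted in the definition of $\Phi$. This is not a genuine obstacle so much as the conceptual heart of the lemma, and it works precisely because, under the uniform index $I$, the law of the single generic trajectory coincides with the empirical measure of the $N$ trajectories. I would also remark that every law encountered here is supported on at most $N$ points, so $\L\{x(t)\}$ is well defined and no measurability issues arise, and that the auxiliary randomness carried by $I$ is kept separate from the randomness of the game, so the asserted equality $\Phi(Q^N)=b^N$ is an identity between two random elements that holds along each sample path.
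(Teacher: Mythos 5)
Your proof is correct and follows essentially the same route as the paper's: both interpret a draw from $Q^N$ as a uniformly chosen index $i\in\{1,\ldots,N\}$ and run an induction on $t$ showing that the generic trajectory coincides with the $i$-th agent's trajectory, so that $\L\{x(t),a(t)\}$ equals the empirical measure $b^N(t)$. Your explicit isolation of step (iii), the identification $\L\{x(t)\}=e^N(t)$ that makes the mean-field argument of $F_t$ match the finite-agent interaction term, is exactly the point the paper uses implicitly in its inductive step.
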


\begin{proof}
Any realization of 
$
Q^N \eqdef \frac{1}{N} \sum_{i=1}^N \delta_{(x_i^N(0),w_i^N,v_i^N)} 
$
must be of the form $(x_i^N(0),w_i^N,v_i^N)$ for some $i=1,\ldots,N$ with equal probability $\frac{1}{N}$. We claim that for each $t\geq0$,
$$
\L\{x(t),a(t),w(t),v(t)\} = \frac{1}{N} \sum_{i=1}^N \delta_{(x_i^N(t),a_i^N(t),w_i^N(t),v_i^N(t))}
$$
under $Q^N$. For $t=0$, with probability $\frac{1}{N}$, 
$$
x(0) = x_i^N(0), \,\, a(0) = G_0(x_i^N(0),v_i^N(0)) \eqdef a_i^N(0), \,\, w(0) = w_i^N(0), \,\, v(0) = v_i^N(0). 
$$
Hence the claim is true for $t=0$. Suppose that the claim is true for all $t\leq k$ and consider $k+1$. In this case, with probability $\frac{1}{N}$, 
\begin{align}
x(k+1) &= F_k(x_i^N(k),a_i^N(k),s^N(k),w_i^N(k)) \eqdef x_i^N(k+1), \nonumber \\
a(k+1) &= G_{k+1}(x(k+1),v(k+1)), \nonumber \\
w(k+1) &= w_i^N(k+1), \,\, v(k+1) = v_i^N(k+1),\nonumber
\end{align}
and so, $a(k+1) = G_{k+1}(x_i^N(k+1),v_i^N(k+1)) \eqdef a_i^N(k+1)$. Therefore, the claim is also true for $k+1$. By the induction hypothesis, the claim is true for all $t\geq0$. 
\end{proof}

To be able to use the contraction principle, we need to establish that $\Phi$ is continuous. Before we prove the continuity of $\Phi$, let us introduce the following notation. For each $t\geq0$, let 
$F^{t+1}: \sX \times \prod_{k=0}^t (\sW\times\sV\times\P(\sX)) \rightarrow \sX$ and 
$G^{t+1}: \sX \times \prod_{k=0}^t (\sW\times\sV\times\P(\sX)) \times \sV \rightarrow \sA$
be defined recursively as follows:
\begin{align}
F^0(x(0)) \eqdef x(0), \,\,
G^0(x(0),v(0)) &\eqdef G_0(x(0),v(0)) \nonumber \\
F^{t+1}\left(x(0),\left[w(k),v(k),\L\{x(k)\}\right]_{k=0}^t\right) &\eqdef F_t\left(F^t,G^t,\L\{x(t)\},w(t)\right) \nonumber \\
G^{t+1}\left(x(0),\left[w(k),v(k),\L\{x(k)\}\right]_{k=0}^t,v(t+1)\right) &\eqdef G_{t+1}\left(F^{t+1},v(t+1)\right). \nonumber
\end{align}
By Assumption~\ref{assumption1}, for each $t\geq0$, the mapping 
$$F^t\left(\,\cdot\,,\left[w(k),v(k)\right]_{k=0}^t\right)$$ is continuous in 
$$\left(x(0),\left[\L\{x(k)\}\right]_{k=0}^t\right)$$ 
for any $\left[w(k),v(k)\right]_{k=0}^t$; that is it is continuous for all its arguments except the noise variables, and same is true for $G^t$; that is, 
$$G^t\left(\,\cdot\,,\left[w(k),v(k)\right]_{k=0}^t,v(t+1)\right)$$ 
is continuous in 
$$\left(x(0),\left[\L\{x(k)\}\right]_{k=0}^t\right)$$ 
for any $(\left[w(k),v(k)\right]_{k=0}^t,v(t+1))$. Using above definitions, we also introduce another notation. Let 
$
f: \sX^{\infty} \times \sA^{\infty} \times \sW^{\infty} \times \sV^{\infty} \rightarrow \R
$ 
be a continuous and bounded function. Then, we introduce recursively the following functions:
\small
\begin{align}
&h_f^0\left(x(0),\ldots,a(1),\ldots,w(0),\ldots,v(0),\ldots\right) \nonumber \\
&\eqdef f\left(F^0,x(1),\ldots,G^0,a(1),\ldots,w(0),\ldots,v(0),\ldots\right) \nonumber \\
&h_f^1\left(x(0),x(2),\ldots,a(2),\ldots,w(0),\ldots,v(0),\ldots,\L\{x(0)\}\right) \nonumber \\
&\eqdef h_f^0\left(x(0),F^1,x(2),\ldots,G^1,a(2),\ldots,w(0),\ldots,v(0),\ldots\right) \nonumber \\
&\phantom{ii}\vdots \nonumber \\
&h_f^{t+1}\left(x(0),x(t+2),\ldots,a(t+2),\ldots,w(0),\ldots,v(0),\ldots,\L\{x(0)\},\ldots\,\L\{x(t)\}\right) \nonumber \\
&\eqdef h_f^t\big(x(0),F^{t+1},x(t+2),\ldots,G^{t+1},a(t+2),\ldots,w(0),\ldots,v(0),\ldots,\nonumber \\
&\phantom{xxxxxxxxxxxxxxxxxxxxxxxxxxxxxxxxxxxxxx}\L\{x(0)\},\ldots\,\L\{x(t-1)\}\big) \nonumber\\
&\phantom{ii}\vdots \nonumber
\end{align}
\normalsize
By Assumption~\ref{assumption1} and continuity of $f$, bounded functions $\{h_f^t\}_{t\geq0}$ are also continuous for all its arguments except the noise variables. It is now time to prove that $\Phi$ is continuous. 
 
\begin{proposition}\label{proposition3}
Under Assumption~\ref{assumption1}, the mapping $\Phi: \P(\sX\times\sW^{\infty}\times\sV^{\infty}) \rightarrow \P(\sX\times\sA)^{\infty}$ is continuous, where $\P(\sX\times\sW^{\infty}\times\sV^{\infty})$ is endowed with setwise convergence topology and $\P(\sX\times\sA)^{\infty}$ has product topology with weak convergence topology on $\P(\sX\times\sA)$. 
\end{proposition}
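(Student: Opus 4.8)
The plan is to reduce continuity of $\Phi$ into the product topology to the continuity of each coordinate map $Q \mapsto \Phi(Q)_t = \L\{x(t),a(t)\}$ into $\P(\sX\times\sA)$ with the weak topology (continuity into a countable product being equivalent to continuity of every coordinate, and the target $\P(\sX\times\sA)$ being metrizable since $\sX\times\sA$ is Polish). Fix a bounded continuous test function $\phi$ on $\sX\times\sA$. Using the recursively defined maps $F^t$ and $G^t$, one has the representation
\[
\int_{\sX\times\sA}\phi\,d\L\{x(t),a(t)\}=\int \phi\Big(F^t\big(x(0),[w,v];[\mu_Q(k)]\big),\,G^t\big(x(0),[w,v],v(t);[\mu_Q(k)]\big)\Big)\,Q(dx(0),dw,dv),
\]
where $\mu_Q(k)\eqdef\L\{x(k)\}$ is the state marginal generated from $Q$, which enters the dynamics as a deterministic parameter. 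Thus everything reduces to controlling, as $Q$ varies, an integral of a bounded function that depends on $Q$ \emph{both} through the integrating measure \emph{and} through the measure-valued parameters $\mu_Q(k)$. I would prove this by induction on $t$ with the hypothesis $C(t)$: whenever $Q_n\to Q$ in the setwise topology, $\L\{x(t),a(t)\}$ converges weakly and hence $\mu_{Q_n}(t)\to\mu_Q(t)$ weakly.

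For the base case $t=0$ there are no measure parameters, since $a(0)=G_0(x(0),v(0))$; the integrand $\phi(x(0),G_0(x(0),v(0)))$ is a \emph{fixed} bounded measurable function on $\sX\times\sW^{\infty}\times\sV^{\infty}$, so $\int\cdot\,dQ_n\to\int\cdot\,dQ$ directly by the definition of setwise convergence, giving $C(0)$. For the inductive step, assume $C(k)$ for $k\le t$, so $\mu_{Q_n}(k)\to\mu_Q(k)$ weakly. Here the two distinct mechanisms must be used in tandem. On the noise and initial-state directions we only have measurability, but this is harmless because setwise convergence of $Q_n$ is strong enough to integrate bounded measurable functions; on the measure-parameter directions we only have weak convergence of $\mu_{Q_n}(k)$, and this is exactly compensated by Assumption~\ref{assumption1}, which makes $F^{t+1}$ and $G^{t+1}$ continuous in $(x(0),[\mu(k)])$ (weak topology on the measures) for every fixed noise sequence. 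Consequently, for fixed $(x(0),w,v)$, the integrands
\[
g_n(x(0),w,v)\eqdef\phi\Big(F^{t+1}\big(\,\cdot\,;[\mu_{Q_n}(k)]\big),\,G^{t+1}\big(\,\cdot\,;[\mu_{Q_n}(k)]\big)\Big)
\]
converge pointwise to the corresponding $g$ built from $[\mu_Q(k)]$, and they are uniformly bounded by $\|\phi\|_{\infty}$.

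The crux, and the step I expect to be the main obstacle, is then to pass to the limit in $\int g_n\,dQ_n$ when \emph{both} $g_n$ and $Q_n$ move: $g_n\to g$ only pointwise while $Q_n\to Q$ only setwise. Pointwise convergence of integrands against a varying measure is not enough in general, so I would invoke the generalized dominated convergence theorem for setwise-convergent measures: if $Q_n\to Q$ setwise and $g_n\to g$ pointwise with $\sup_n\|g_n\|_{\infty}<\infty$, then $\int g_n\,dQ_n\to\int g\,dQ$. The engine behind this is Vitali--Hahn--Saks: setwise convergence forces $\{Q_n\}$ to be uniformly absolutely continuous with respect to a single dominating measure $\lambda$ (e.g.\ $\lambda=Q+\sum_n 2^{-n}Q_n$), so the densities $dQ_n/d\lambda$ are uniformly integrable; splitting $\int g_n\,dQ_n-\int g\,dQ=\int (g_n-g)\,dQ_n+\int g\,(dQ_n-dQ)$, the second term vanishes because $g$ is a fixed bounded measurable function, and the first vanishes by combining uniform integrability of the densities with dominated convergence of $g_n-g\to0$ against the fixed $\lambda$. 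Applying this gives $\L\{x(t+1),a(t+1)\}\to\L\{x(t+1),a(t+1)\}$ weakly, whose $\sX$-marginal yields $\mu_{Q_n}(t+1)\to\mu_Q(t+1)$ weakly, closing the induction.

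A secondary technical point I would flag is that the setwise topology on $\P(\sX\times\sW^{\infty}\times\sV^{\infty})$ is not metrizable, so the argument above establishes \emph{sequential} continuity of each coordinate, whereas the Vitali--Hahn--Saks step is intrinsically sequential. Since the target is metrizable and the subsequent use of $\Phi$ is through the contraction principle applied to the LDP for $\{Q^N\}_{N\ge1}$ (a statement about the sequence indexed by $N$), sequential continuity is what the application actually requires; I would note this explicitly rather than attempt a net-level statement. The auxiliary functions $h_f^t$ introduced before the proposition package precisely the iterated substitution $a(k)=G^k,\ x(k+1)=F^{k+1}$ used in the representation above, so they can be invoked to write the integral of any bounded continuous $f$ against the joint trajectory law as $\int h_f^t(x(0),\text{noise};[\mu_Q(k)])\,dQ$, to which the same two-mechanism/Vitali--Hahn--Saks argument applies verbatim.
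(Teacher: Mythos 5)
Your overall architecture coincides with the paper's: reduce to coordinatewise convergence, represent $\int f\,d\L\{x(t),a(t)\}$ via the iterated maps $F^t,G^t$ (equivalently the functions $h^t_f$), induct on $t$, and at each step pass to the limit in an integral where the measure converges only setwise while the integrand depends on the previously obtained state marginals, which converge only weakly. The base case and the identification of the crux are exactly as in the paper. Where you diverge is the tool used at the crux: you invoke a generalized dominated convergence theorem for setwise-convergent \emph{sequences} of measures, powered by Vitali--Hahn--Saks (uniform absolute continuity of $\{Q_n\}$ with respect to a dominating $\lambda$, plus Egorov). That lemma is correct for sequences, and your two-term splitting handles the simultaneous variation of $g_n$ and $Q_n$ properly. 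The paper instead forms the product measures $\bigl(\delta_{\L\{x^{\alpha}(0)\}}\times\cdots\times\delta_{\L\{x^{\alpha}(k)\}}\bigr)\times Q^{\alpha}$ and appeals to Sch\"al's result \cite[Theorem 3.7(ii)]{Sch75} that weak convergence in one factor together with setwise convergence in the other yields convergence in the $ws$-topology --- a statement valid for arbitrary \emph{nets}, so that genuine topological continuity of $\Phi$ is obtained.

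This difference is not cosmetic: it is a genuine gap. Your Vitali--Hahn--Saks engine is intrinsically sequential (it rests on Baire-category arguments that do not survive passage to nets), so your proof only delivers sequential continuity of $\Phi$ on the non-metrizable, non-sequential setwise topology. Your closing claim that sequential continuity ``is what the application actually requires'' is where the argument breaks. The contraction principle (Theorem~\ref{contraction}) is a statement about open and closed sets of the topology, not about sequences: the LDP upper bound is transferred by writing $\{f(Z_N)\in F\}=\{Z_N\in f^{-1}(F)\}$ and using that $f^{-1}(F)$ is closed, and the induced $J$ has compact sublevel sets because $\{J\le M\}=f(\{I\le M\})$ is the continuous image of a compact set. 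Under mere sequential continuity, $f^{-1}(F)$ is only sequentially closed --- which in the setwise topology on $\P(\sX\times\sW^{\infty}\times\sV^{\infty})$ need not be closed --- so the Sanov upper bound cannot be applied to it, and the compactness of $f(\{I\le M\})$ is likewise not guaranteed. The fact that the LDP concerns the particular sequence $\{Q^N\}_{N\ge1}$ does not localize the continuity requirement to sequences. To close the gap you would either have to redo the limit passage for nets (which is precisely what the $ws$-topology product-convergence theorem accomplishes, and why the paper works with nets from the outset), or carefully justify a restricted contraction principle using that the sublevel sets of the relative-entropy rate function are setwise compact and metrizable in the setwise topology --- neither of which your proposal supplies.
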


\begin{proof}
Since the setwise topology is non-metrizable for non-finite sets, we work with nets instead of sequences to prove continuity. To this end, let $\{Q^{\alpha}\}$ be a net in $\P(\sX\times\sW^{\infty}\times\sV^{\infty})$ that converges to $Q$ setwise. Define $\Phi(Q^{\alpha}) \eqdef (b^{\alpha}(t))_{t\geq0}$ and $\Phi(Q) \eqdef (b(t))_{t\geq0}$. We need to prove that $b^{\alpha}(t) \rightarrow b(t)$ weakly for each $t\geq0$ since $\P(\sX\times\sA)$ is endowed with weak convergence topology. 

Let $\{x^{\alpha}(t),a^{\alpha}(t),w^{\alpha}(t),v^{\alpha}(t)\}_{t\geq0}$ be random elements defined by equations (\ref{eq4})-(\ref{eq6}) under $Q^{\alpha}$, and let $\{x(t),a(t),w(t),v(t)\}_{t\geq0}$ be random elements defined by equations (\ref{eq4})-(\ref{eq6}) under $Q$. We claim that for each $k\geq0$, we have
$$
\L\left(\{x^{\alpha}(t),a^{\alpha}(t),w^{\alpha}(t),v^{\alpha}(t)\}_{t=0}^k\right) \rightarrow \L\left(\{x(t),a(t),w(t),v(t)\}_{t=0}^k\right)
$$
weakly. For $k=0$, let $f \in C_b(\sX\times\sA\times\sW\times\sV)$. Then we have
\small
\begin{align}
&\int f \, d\L\left(x^{\alpha}(0),a^{\alpha}(0),w^{\alpha}(0),v^{\alpha}(0)\right) = E\left[f\left(x^{\alpha}(0),G^0(x^{\alpha}(0),v^{\alpha}(0)),w^{\alpha}(0),v^{\alpha}(0)\right)\right] \nonumber \\
&= E\left[h_f^0\left(x^{\alpha}(0),w^{\alpha}(0),v^{\alpha}(0)\right)\right] 
\longrightarrow E\left[h_f^0\left(x(0),w(0),v(0)\right)\right] \nonumber \\
&= \int f \, d\L\left(x(0),a(0),w(0),v(0)\right) \nonumber
\end{align}
\normalsize
since $h_f^0$ is measurable and bounded, and $Q^{\alpha} \rightarrow Q$ setwise. Hence the claim holds for $k=0$. Suppose that the claim holds until $k$ and consider $k+1$. Let $f \in C_b(\sX^{k+1}\times\sA^{k+1}\times\sW^{k+1}\times\sV^{k+1})$. Since $Q^{\alpha}\rightarrow Q$ setwise and $$\delta_{\L\{x^{\alpha}(0)\}} \times \ldots \times \delta_{\L\{x^{\alpha}(k)\}} \rightarrow \delta_{\L\{x(0)\}} \times \ldots \times \delta_{\L\{x(k)\}}$$ 
weakly as $\L\{x^{\alpha}(t)\} \rightarrow \L\{x(t)\}$ weakly for each $0\leq t \leq k$ by the induction hypothesis, by \cite[Theorem 3.7(ii)]{Sch75}, we have  
\begin{align}
\left(\delta_{\L\{x^{\alpha}(0)\}} \times \ldots \times \delta_{\L\{x^{\alpha}(k)\}}\right)\times Q^{\alpha} \rightarrow \left(\delta_{\L\{x(0)\}} \times \ldots \times \delta_{\L\{x(k)\}}\right) \times Q \label{aa}
\end{align}
in $ws$-topology on $\P\left(\left[\P(\sX)\times\ldots\times\P(\sX)\right]\times\left[\sX\times\sW^{\infty}\times\sV^{\infty}\right]\right)$. 
Then, we have 
\footnotesize
\begin{align}
&\int f \, d\L\left(\{x^{\alpha}(t),a^{\alpha}(t),w^{\alpha}(t),v^{\alpha}(t)\}_{t=0}^{k+1}\right) \nonumber \\
&= E\left[h_f^{k+1}\left(x^{\alpha}(0),w^{\alpha}(0),\ldots,w^{\alpha}(k+1),v^{\alpha}(0),\ldots,v^{\alpha}(k+1),\L\{x^{\alpha}(0)\},\ldots,\L\{x^{\alpha}(k)\}\right)\right]
\nonumber \\
&=\int h_f^{k+1}\left(x_0,w_0,\ldots,w_{k+1},v_0,\ldots,v_{k+1},\mu_0,\ldots,\mu_k\right) \nonumber \\ &\phantom{xxxxxxxxx}dQ^{\alpha}\left(x_0,w_0,\ldots,w_{k+1},v_0,\ldots,v_{k+1}\right) \times \delta_{\L\{x^{\alpha}(0)\}}(d\mu_0) \times \ldots \times \delta_{\L\{x^{\alpha}(k)\}}(d\mu_k) 
\nonumber \\
&\overset{(a)}{\longrightarrow}
\int h_f^{k+1}\left(x_0,w_0,\ldots,w_{k+1},v_0,\ldots,v_{k+1},\mu_0,\ldots,\mu_k\right) \nonumber \\ &\phantom{xxxxxxxxx}dQ\left(x_0,w_0,\ldots,w_{k+1},v_0,\ldots,v_{k+1}\right) \times \delta_{\L\{x(0)\}}(d\mu_0) \times \ldots \times \delta_{\L\{x(k)\}}(d\mu_k) 
\nonumber \\
&=E\left[h_f^{k+1}\left(x(0),w(0),\ldots,w(k+1),v(0),\ldots,v(k+1),\L\{x(0)\},\ldots,\L\{x(k)\}\right)\right] \nonumber \\
&= \int f \, d\L\left(\{x(t),a(t),w(t),v(t)\}_{t=0}^{k+1}\right), \nonumber
\end{align}  
\normalsize
where (a) follows from (\ref{aa}) and the fact that the bounded function $h_f^{k+1}$ is continuous for all its arguments except noise variables. Therefore, by mathematical induction, this establishes the claim for each $k \geq 0$. Note that the claim implies that 
$$
b^{\alpha}(t) \eqdef \L\{x^{\alpha}(t),a^{\alpha}(t)\} \rightarrow \L\{x(t),a(t)\} \eqdef b(t)
$$
weakly for each $t\geq0$. This completes the proof. 
\end{proof}

Now using the contraction principle (i.e., Theorem~\ref{contraction}), we obtain the following result.

\begin{theorem}\label{theorem3}
The empirical measures $\left\{(b^N(t))_{t\geq0}\right\}_{N\geq1}$ satisfy the LDP on $\P(\sX\times\sA)^{\infty}$ with the rate function 
$$
J(\bnu) = \inf \left\{R(\Theta \,|\, \mu(0)\times\Theta_w\times\Theta_v): \Phi(\Theta) = \bnu \right\}.
$$
\end{theorem}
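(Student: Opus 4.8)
The plan is to obtain the result as an immediate consequence of the contraction principle (Theorem~\ref{contraction}), by assembling the three ingredients already established above. The starting point is Sanov's theorem in the setwise topology \cite[Theorem 1.1]{Dea94}, which tells us that $\{Q^N\}_{N\geq1}$ satisfies the LDP on $\P(\sX\times\sW^{\infty}\times\sV^{\infty})$ with rate function $I(\Theta)\eqdef R(\Theta\,|\,\mu(0)\times\Theta_w\times\Theta_v)$; crucially, the sub-level sets $\{I\leq M\}$ are compact in the setwise topology, so $I$ is a genuine rate function in the sense used in this paper. The second ingredient is Proposition~\ref{proposition3}, which shows that $\Phi:\P(\sX\times\sW^{\infty}\times\sV^{\infty})\rightarrow\P(\sX\times\sA)^{\infty}$ is continuous when the domain carries the setwise topology and the codomain carries the product topology built from the weak topology on $\P(\sX\times\sA)$. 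The third ingredient is Lemma~\ref{lemma1}, which identifies the image of the empirical measure as $\Phi(Q^N)=b^N$ for every $N$.

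With these in hand, I would first verify the standing hypotheses of Theorem~\ref{contraction}: both $\P(\sX\times\sW^{\infty}\times\sV^{\infty})$ with the setwise topology and $\P(\sX\times\sA)^{\infty}$ with the product-weak topology are Hausdorff, the former because the setwise topology is finer than the weak topology and hence separates points, and the latter because a product of Hausdorff spaces is Hausdorff. Since $I$ is a rate function and $\Phi$ is continuous, part~(a) of the contraction principle guarantees that $J(\bnu)\eqdef\inf\{I(\Theta):\Phi(\Theta)=\bnu\}$ is a rate function on $\P(\sX\times\sA)^{\infty}$, and part~(b), applied with $Z_N=Q^N$ and $f=\Phi$, guarantees that $\{\Phi(Q^N)\}_{N\geq1}$ satisfies the LDP with this rate function. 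Combining with Lemma~\ref{lemma1} to replace $\Phi(Q^N)$ by $b^N$ then yields precisely the asserted LDP for $\{(b^N(t))_{t\geq0}\}_{N\geq1}$ with rate function $J(\bnu)=\inf\{R(\Theta\,|\,\mu(0)\times\Theta_w\times\Theta_v):\Phi(\Theta)=\bnu\}$.

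I do not expect any real obstacle at this stage: all the analytic difficulty has already been absorbed into Proposition~\ref{proposition3}, whose proof handles the non-metrizability of the setwise topology by working with nets and which establishes continuity through the auxiliary maps $F^t$, $G^t$, and $h_f^t$. The present argument is purely a matter of checking that the contraction principle applies and bookkeeping the identification $\Phi(Q^N)=b^N$. The single point worth a sentence of care is confirming that Sanov's theorem in the setwise topology delivers a rate function with compact (not merely closed) sub-level sets, so that the transported function $J$ is again a rate function; this is exactly the content of the cited theorem, so no additional work is required.
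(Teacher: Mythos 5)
Your argument is correct and is exactly the route the paper takes: Theorem~\ref{theorem3} is stated in the paper as an immediate consequence of Sanov's theorem in the setwise topology, Lemma~\ref{lemma1}, Proposition~\ref{proposition3}, and the contraction principle (Theorem~\ref{contraction}). Your additional checks (Hausdorffness of both spaces and compactness of the sub-level sets of the Sanov rate function) are sensible due diligence but do not change the argument.
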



In the above characterization of the rate function $J$, we need the exact knowledge of the noise distributions $\Theta_w$ and $\Theta_v$, which may not be available. Therefore, we should replace this rate function with an equivalent one that can be expressed in terms of the components of the game dynamics. 
Note that, for any $\Theta$, by the chain rule \cite[Theorem C.3.1]{DuEl97} for relative entropy, we have 
\begin{align}
&R(\Theta\,|\,\mu(0)\times\Theta_w\times\Theta_v) = 
R(\Theta_0\,|\,\mu(0)) + \int R(\Theta^{x(0)}\,|\,\Theta_w\times\Theta_v) \, \Theta_0(dx(0)) \nonumber \\
&= R(\Theta_0\,|\,\mu(0)) + R(\Theta_0\,|\,\Theta_0) +\int R(\Theta^{x(0)}\,|\,\Theta_w\times\Theta_v) \, \Theta_0(dx(0))  \nonumber \\
&=R(\Theta_0\,|\,\mu(0)) + R(\Theta\,|\,\Theta_0\times\Theta_w\times\Theta_v). \nonumber
\end{align} 
Hence, we can alternatively write the rate function $J$ as follows:
\begin{align} \label{eq7}
J(\bnu) = R(\nu_{0,\sX}\,|\,\mu(0)) + \inf \left\{R(\Theta \,|\, \nu_{0,\sX}\times\Theta_w\times\Theta_v): \Phi(\Theta) = \bnu \right\}.
\end{align}
We now prove that the function $\inf \left\{R(\Theta \,|\, \nu_{0,\sX}\times\Theta_w\times\Theta_v): \Phi(\Theta) = \bnu \right\}$ is the rate function at $\bnu$ of the following simplified particle model. 

\subsection{Simplified Particle Model}\label{sub1sec3}

Given any $\bnu \in \P(\sX\times\sA)^{\infty}$, consider the following simplified particle model. For each $i=1,\ldots,N$, Agent~$i$ has the following state-action dynamics:
\begin{align}
\tilde{x}_i^N(0) \sim \nu_{0,\sX}, \,\, \tilde{x}_i^N(t+1) &= F_t\left(\tilde{x}_i^N(t),\tilde{a}_i^N(t),\nu_{t,\sX},\tilde{w}_i^N(t)\right) \nonumber \\
\tilde{a}_i^N(t) &= G_t\left(\tilde{x}_i^N(t),\tilde{v}_i^N(t)\right), \,\, t\geq0. \nonumber 
\end{align}
Similar to the original model, we suppose that $\{(\tilde{x}_i^N(0),\tilde{w}_i^N,\tilde{v}_i^N)\}_{i=1}^N$ are i.i.d. with the common distribution $\nu_{0,\sX} \otimes \Theta_w \otimes \Theta_v$, where 
$\tilde{w}_i^N \eqdef (\tilde{w}_i^N(t))_{t\geq0}$ and $\tilde{v}_i^N \eqdef (\tilde{v}_i^N(t))_{t\geq0}$. Let us define the empirical measure
$
\tilde{Q}^N \eqdef \frac{1}{N} \sum_{i=1}^N \delta_{(\tilde{x}_i^N(0),\tilde{w}_i^N,\tilde{v}_i^N)}. 
$
We again endow the set of probability measures $\P(\sX \times \sW^{\infty} \times \sV^{\infty})$ with setwise convergence topology. Then, by Sanov's Theorem in setwise convergence topology \cite[Theorem 1.1]{Dea94}, $\{\tilde{Q}^N\}_{N\geq1}$ satisfies the LDP on $\P(\sX \times \sW^{\infty} \times \sV^{\infty})$ with the rate function $R(\,\cdot\,|\nu_{0,\sX}\times\Theta_w\times\Theta_v)$.

In the simplified model, instead of the empirical measure $e^N(t)$ of states, we have the fixed measure $\nu_{t,\sX}$ in the state dynamics. Therefore, in this model, agents are decoupled from each other. For each $t\geq0$, let us define the state-action empirical measure 
$$
\tilde{b}^N(t) \eqdef \frac{1}{N} \sum_{i=1}^N \delta_{(\tilde{x}_i^N(t),\tilde{a}_i^N(t))}. 
$$ 
We let $\tilde{b}^N \eqdef (\tilde{b}^N(t))_{t\geq0}$. Similar to the $\Phi$, we now define 
$$\Phi^{\bnu}:\P(\sX\times\sW^{\infty}\times\sV^{\infty}) \rightarrow \P(\sX\times\sA)^{\infty}$$ 
as follows. Given any $Q \in \P(\sX\times\sW^{\infty}\times\sV^{\infty})$, let 
$$
\left(\tilde{x}(0),\tilde{w}(0),\ldots,\tilde{v}(0),\ldots\right) \sim Q
$$ 
and define recursively the following random elements:
\begin{align}
\tilde{x}(t+1) = F_t(\tilde{x}(t),\tilde{a}(t),\nu_{t,\sX},\tilde{w}(t)), \,\,\,\,
\tilde{a}(t) = G_t(\tilde{x}(t),\tilde{v}(t)), \,\, t\geq 0.\nonumber 
\end{align}
Then, we let $\Phi^{\bnu}(Q) \eqdef \left\{\L(\tilde{x}(t),\tilde{a}(t))\right\}_{t\geq0}$. The following results can be proved using the same ideas in the proofs of Lemma~\ref{lemma1} and Proposition~\ref{proposition3}, and so, we omit the details.

\begin{lemma}\label{lemma2}
For each $N\geq1$, we have $\Phi^{\bnu}(\tilde{Q}^N) = \tilde{b}^N$. 
\end{lemma}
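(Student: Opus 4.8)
The plan is to mimic the induction in the proof of Lemma~\ref{lemma1} almost verbatim; the one conceptual difference is that the simplified model already carries the \emph{fixed} measure flow $(\nu_{t,\sX})_{t\geq0}$ in its state dynamics, so no identification of a mean-field argument with an empirical measure is needed. First I would observe that any realization of $\tilde{Q}^N = \frac{1}{N}\sum_{i=1}^N \delta_{(\tilde{x}_i^N(0),\tilde{w}_i^N,\tilde{v}_i^N)}$ is one of its $N$ atoms, each selected with probability $\frac{1}{N}$. I would then prove by induction on $t$ that, under $\tilde{Q}^N$,
$$
\L\{\tilde{x}(t),\tilde{a}(t),\tilde{w}(t),\tilde{v}(t)\} = \frac{1}{N}\sum_{i=1}^N \delta_{(\tilde{x}_i^N(t),\tilde{a}_i^N(t),\tilde{w}_i^N(t),\tilde{v}_i^N(t))}.
$$

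For the base case $t=0$, conditional on selecting the $i$-th atom one has $\tilde{x}(0)=\tilde{x}_i^N(0)$, $\tilde{w}(0)=\tilde{w}_i^N(0)$, $\tilde{v}(0)=\tilde{v}_i^N(0)$, and hence $\tilde{a}(0) = G_0(\tilde{x}(0),\tilde{v}(0)) = G_0(\tilde{x}_i^N(0),\tilde{v}_i^N(0)) = \tilde{a}_i^N(0)$ by the definition of the simplified dynamics, which establishes the claim at $t=0$. For the inductive step, assuming the claim through time $k$, I would apply the recursion defining $\Phi^{\bnu}$: conditional on selecting the $i$-th atom,
$$
\tilde{x}(k+1) = F_k\left(\tilde{x}_i^N(k),\tilde{a}_i^N(k),\nu_{k,\sX},\tilde{w}_i^N(k)\right) = \tilde{x}_i^N(k+1),
$$
where the key point is that the third argument $\nu_{k,\sX}$ is \emph{identical} in the definition of $\Phi^{\bnu}$ and in the simplified model. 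Then $\tilde{a}(k+1) = G_{k+1}(\tilde{x}(k+1),\tilde{v}(k+1)) = G_{k+1}(\tilde{x}_i^N(k+1),\tilde{v}_i^N(k+1)) = \tilde{a}_i^N(k+1)$, which completes the induction. Marginalizing the resulting joint law onto the $(\tilde{x},\tilde{a})$-coordinates at each $t$ yields $\L\{\tilde{x}(t),\tilde{a}(t)\} = \tilde{b}^N(t)$, i.e. $\Phi^{\bnu}(\tilde{Q}^N)=\tilde{b}^N$.

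Since every step is a direct transcription of the argument for Lemma~\ref{lemma1}, I do not anticipate a genuine obstacle. The single point that deserves a remark is precisely the simplification noted above: in Lemma~\ref{lemma1} one had to verify the identity $\L\{x(k)\} = e^N(k)$ in order to match the coupled game dynamics with the recursion defining $\Phi$, whereas here freezing the mean-field term at the deterministic $\nu_{k,\sX}$ makes that verification unnecessary, so the induction goes through mechanically.
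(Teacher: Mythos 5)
Your proof is correct and is exactly the argument the paper intends: the paper omits the details of Lemma~\ref{lemma2}, stating only that it follows by the same ideas as Lemma~\ref{lemma1}, and your induction on the atoms of $\tilde{Q}^N$ is that argument. Your observation that the step identifying $\L\{x(k)\}$ with the empirical measure $e^N(k)$ becomes unnecessary here, because both the simplified dynamics and $\Phi^{\bnu}$ use the same fixed flow $(\nu_{t,\sX})_{t\geq0}$, is precisely the simplification that makes the adaptation mechanical.
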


\begin{proposition}\label{proposition4}
Under Assumption~\ref{assumption1}, the mapping $\Phi^{\bnu}: \P(\sX\times\sW^{\infty}\times\sV^{\infty}) \rightarrow \P(\sX\times\sA)^{\infty}$ is continuous, where $\P(\sX\times\sW^{\infty}\times\sV^{\infty})$ is endowed with setwise convergence topology and $\P(\sX\times\sA)^{\infty}$ has product topology with weak convergence topology on $\P(\sX\times\sA)$. 
\end{proposition}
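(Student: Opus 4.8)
The plan is to follow the template of the proof of Proposition~\ref{proposition3}, again working with nets since the setwise topology is non-metrizable, but to exploit the fact that the simplified dynamics are markedly easier to handle: the mean-field input is now the \emph{fixed} flow $(\nu_{t,\sX})_{t\geq0}$ rather than the $Q$-dependent law $\L\{x(t)\}$. First I would fix a net $\{Q^{\alpha}\}$ in $\P(\sX\times\sW^{\infty}\times\sV^{\infty})$ with $Q^{\alpha}\to Q$ setwise, and write $\Phi^{\bnu}(Q^{\alpha})\eqdef(\tilde{b}^{\alpha}(t))_{t\geq0}$ and $\Phi^{\bnu}(Q)\eqdef(\tilde{b}(t))_{t\geq0}$. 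Because the codomain carries the product-of-weak topology, it suffices to show $\tilde{b}^{\alpha}(t)\to\tilde{b}(t)$ weakly for each $t\geq0$.

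Next I would observe that, since $\nu_{t,\sX}$ does not depend on $Q$, the recursive maps $F^{t}$ and $G^{t}$ from the proof of Proposition~\ref{proposition3} collapse into maps $\tilde{F}^{t}$ and $\tilde{G}^{t}$ that are bounded \emph{measurable} functions of the initial state and noise variables $\left(x(0),[w(k),v(k)]_{k=0}^{t}\right)$ alone, with the constants $\nu_{k,\sX}$ already substituted in. Consequently, for each $t$ the pair $(\tilde{x}(t),\tilde{a}(t))$ is obtained from a sample of $Q$ by a single fixed measurable map $\Psi_t:\sX\times\sW^{\infty}\times\sV^{\infty}\to\sX\times\sA$. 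Given any $g\in C_b(\sX\times\sA)$, the composition $g\circ\Psi_t$ then lies in $B(\sX\times\sW^{\infty}\times\sV^{\infty})$, so that
\[
\int g\,d\tilde{b}^{\alpha}(t)=\int (g\circ\Psi_t)\,dQ^{\alpha}\longrightarrow\int (g\circ\Psi_t)\,dQ=\int g\,d\tilde{b}(t)
\]
follows directly from $Q^{\alpha}\to Q$ setwise, since setwise convergence is exactly convergence tested against all bounded measurable functions. This yields $\tilde{b}^{\alpha}(t)\to\tilde{b}(t)$ weakly for every $t\geq0$, hence the continuity of $\Phi^{\bnu}$.

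The point worth stressing is where the simplification relative to Proposition~\ref{proposition3} occurs. There the mean-field argument $\L\{x^{\alpha}(t)\}$ itself varied with $\alpha$, which forced the introduction of the joint measures $\delta_{\L\{x^{\alpha}(0)\}}\times\cdots\times Q^{\alpha}$, the $ws$-topology, and Schäl's product theorem \cite[Theorem 3.7(ii)]{Sch75} to reconcile the weak convergence of the empirical-law factors with the setwise convergence of $Q^{\alpha}$. In the present setting no such reconciliation is needed: fixing the flow $(\nu_{t,\sX})$ removes every moving measure argument, so the nested induction of Proposition~\ref{proposition3} degenerates into the one-line setwise computation above. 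In particular, only the measurability of $F_t$ and $G_t$ is actually invoked, so the continuity supplied by Assumption~\ref{assumption1} plays no essential role here, although retaining it keeps the statement parallel to Proposition~\ref{proposition3}. I therefore expect no genuine obstacle; the only care required is the routine verification that each $\Psi_t$ is measurable, which is immediate from the measurability of $F_t$ and $G_t$ together with the recursive construction.
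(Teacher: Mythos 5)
Your proof is correct, and it takes a genuinely leaner route than the one the paper gestures at. The paper declares that Proposition~\ref{proposition4} follows by ``the same ideas'' as Proposition~\ref{proposition3}, i.e.\ by rerunning the induction with the auxiliary functions $h_f^t$, the product measures $\delta_{\L\{x^{\alpha}(0)\}}\times\cdots\times Q^{\alpha}$, the $ws$-topology, and \cite[Theorem 3.7(ii)]{Sch75}. You correctly observe that all of that machinery exists only to reconcile the $\alpha$-dependent weakly convergent arguments $\L\{x^{\alpha}(t)\}$ with the setwise convergence of $Q^{\alpha}$, and that in the simplified model these arguments are frozen at the fixed flow $(\nu_{t,\sX})_{t\geq0}$. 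Hence $(\tilde{x}(t),\tilde{a}(t))$ is the image of $(x(0),[w(k),v(k)]_k)$ under a single measurable map $\Psi_t$ (measurability following from the recursive composition of the measurable $F_t$ and $G_t$), the law $\tilde{b}^{\alpha}(t)$ is the pushforward $Q^{\alpha}\circ\Psi_t^{-1}$, and for $g\in C_b(\sX\times\sA)$ the test function $g\circ\Psi_t$ lies in $B(\sX\times\sW^{\infty}\times\sV^{\infty})$, so setwise convergence of $Q^{\alpha}$ immediately gives weak convergence of each coordinate of $\Phi^{\bnu}(Q^{\alpha})$. What your approach buys is twofold: it eliminates the induction and the $ws$-topology entirely, and it exposes that Assumption~\ref{assumption1} is not actually used in Proposition~\ref{proposition4} --- only measurability of $F_t$ and $G_t$ is needed, because the only quantities that vary along the net are integrated against a setwise-convergent measure. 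The only cosmetic quibble is your phrase ``bounded measurable'' applied to the $\sX$- and $\sA$-valued maps $\tilde{F}^t,\tilde{G}^t$ themselves; boundedness is meaningless there and is only relevant (and automatic) for the real-valued composition $g\circ\Psi_t$.
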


Using LDP of the empirical process $\{\tilde{Q}^N\}_{N\geq1}$, Lemma~\ref{lemma2}, Proposition~\ref{proposition4}, and the contraction principle (i.e., Theorem~\ref{contraction}), we can conclude that the empirical measures $\{(\tilde{b}^N(t))_{t\geq0}\}_{N\geq1}$ satisfy the LDP on $\P(\sX\times\sA)^{\infty}$ with the rate function
$$
\tilde{J}({\boldsymbol \gamma}) = \inf \left\{R(\Theta \,|\, \nu_{0,\sX}\times\Theta_w\times\Theta_v): \Phi^{\bnu}(\Theta) = {\boldsymbol \gamma} \right\}.
$$
The next result will be used to connect the rate functions $J$ and $\tilde{J}$. 

\begin{lemma}\label{lemma3} 
For any $\bnu \in \P(\sX\times\sA)^{\infty}$, we have
$$
\left\{\Theta: \Phi^{\bnu}(\Theta) = \bnu \right\} = \left\{\Theta: \Phi(\Theta) = \bnu \right\}. 
$$
\end{lemma}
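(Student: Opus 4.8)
The plan is to prove both inclusions at once by coupling the two recursions on a common probability space. Fix $\Theta \in \P(\sX\times\sW^{\infty}\times\sV^{\infty})$ and realize a single random tuple $(x(0),w(0),w(1),\ldots,v(0),v(1),\ldots)\sim\Theta$. From this same realization I build two sequences of random elements: the self-consistent sequence $(x(t),a(t))_{t\geq0}$ defined by the $\Phi$-recursion (\ref{eq5})--(\ref{eq6}), in which $F_t$ receives the law $\L\{x(t)\}$; and the decoupled sequence $(\tilde{x}(t),\tilde{a}(t))_{t\geq0}$ defined by the $\Phi^{\bnu}$-recursion, in which $F_t$ instead receives the fixed measure $\nu_{t,\sX}$. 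The crucial observation is that the two recursions differ \emph{only} in the measure fed into $F_t$, so they coincide pointwise as soon as one knows that $\L\{x(t)\}=\nu_{t,\sX}$ at every step.

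The core of the argument is a single induction establishing that, under either hypothesis $\Phi(\Theta)=\bnu$ or $\Phi^{\bnu}(\Theta)=\bnu$, one has $(x(t),a(t))=(\tilde{x}(t),\tilde{a}(t))$ pointwise and $\L\{x(t)\}=\nu_{t,\sX}$ for every $t$. For the base case, $x(0)=\tilde{x}(0)$ and $a(0)=G_0(x(0),v(0))=\tilde{a}(0)$ identically, since both recursions are driven by the same tuple. Whichever hypothesis is assumed forces $\L\{\tilde{x}(0)\}=\nu_{0,\sX}$ (directly from $\Phi^{\bnu}(\Theta)=\bnu$, or via the pointwise identity together with $\Phi(\Theta)=\bnu$), which is exactly $\L\{x(0)\}=\nu_{0,\sX}$. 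For the inductive step, assuming the claim up to time $t$, both recursions compute the next state as $F_t(x(t),a(t),\nu_{t,\sX},w(t))$ --- the self-consistent law $\L\{x(t)\}$ appearing in $\Phi$ has been identified with $\nu_{t,\sX}$ --- so $x(t+1)=\tilde{x}(t+1)$ pointwise, and then $a(t+1)=G_{t+1}(x(t+1),v(t+1))=\tilde{a}(t+1)$. The remaining identity $\L\{x(t+1)\}=\nu_{t+1,\sX}$ follows because $\L\{x(t+1),a(t+1)\}=\L\{\tilde{x}(t+1),\tilde{a}(t+1)\}$ equals $\nu_{t+1}$ under either hypothesis.

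Once the induction is complete, both inclusions drop out immediately: if $\Phi^{\bnu}(\Theta)=\bnu$ then the pointwise identity gives $\L\{x(t),a(t)\}=\L\{\tilde{x}(t),\tilde{a}(t)\}=\nu_t$ for all $t$, i.e. $\Phi(\Theta)=\bnu$, and symmetrically for the reverse inclusion. The only point requiring care --- and the one I would single out as the main subtlety --- is the self-referential nature of the measure argument $\L\{x(t)\}$ in the definition of $\Phi$: one cannot assume a priori that the self-consistent flow produced by the $\Phi$-recursion equals $\nu_{t,\sX}$, so this equality must be propagated jointly with the pointwise coincidence of the trajectories at each step of the induction, rather than verified separately. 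No continuity or topological input is needed here; the statement is a purely pathwise identity of measure flows, and Assumption~\ref{assumption1} plays no role in this particular lemma.
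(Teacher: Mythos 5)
Your proof is correct and rests on the same observation as the paper's: the two recursions differ only in the measure fed to $F_t$, so the trajectories coincide pathwise once $\L\{x(t)\}=\nu_{t,\sX}$ is identified at each step. Your explicit coupled induction is in fact slightly more complete than the paper's argument, which handles the inclusion $\{\Phi(\Theta)=\bnu\}\subset\{\Phi^{\bnu}(\Theta)=\bnu\}$ directly (there $\L\{x(t)\}=\nu_{t,\sX}$ follows immediately from the hypothesis) and dismisses the reverse inclusion as similar, whereas the reverse direction genuinely requires the step-by-step propagation you spell out.
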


\begin{proof}
Let $\Theta \in \left\{\Theta: \Phi(\Theta) = \bnu \right\}$. Given $(x(0),w(0),\ldots,v(0),\ldots) \sim \Theta$, we define recursively the following random elements
\begin{align}
x(t+1) &= F_t(x(t),a(t),\L\{x(t)\},w(t)) \label{eq8} \\
a(t) &= G_t(x(t),v(t)), \,\, t\geq0. \nonumber 
\end{align}
Then, by definition $\Phi(\Theta) \eqdef \{\L(x(t),a(t))\}_{t\geq0}$. Since $\Phi(\Theta) = \bnu$, we have $\L\{x(t)\} = \nu_{t,\sX}$ for each $t\geq0$. Therefore, we can equivalently write (\ref{eq8}) as follows:
$$
x(t+1) = F_t(x(t),a(t),\nu_{t,\sX},w(t)). 
$$
This implies that $\Phi^{\bnu}(\Theta) = \bnu$ if we recall the definition of $\Phi^{\bnu}$. Hence 
$$\left\{\Theta: \Phi^{\bnu}(\Theta) = \bnu \right\} \supset \left\{\Theta: \Phi(\Theta) = \bnu \right\}.$$ The reverse inclusion can be proved similarly.
\end{proof}
In view of equation (\ref{eq7}), Lemma~\ref{lemma3} implies that
$$
J(\bnu) = R(\nu_{0,\sX}\,|\,\mu(0)) + \tilde{J}(\bnu).
$$
Note that if a process satisfies LDP with a rate function, then this rate function must be unique \cite[Theorem 1.3.1]{DuEl97}. Using this fact, we obtain another characterization of $\tilde{J}$ in terms of the components of the game dynamics. Therefore, this will lead to the characterization of $J$ in the statement of Theorem~\ref{theorem2}.

For each $i=1,\ldots,N$, let us define $\tilde{x}_i^N \eqdef (\tilde{x}_i^N(t))_{t\geq0}$ and $\tilde{a}_i^N \eqdef (\tilde{a}_i^N(t))_{t\geq0}$. Then, $\{\tilde{x}_i^N,\tilde{a}_i^N\}_{i=1}^N$ are i.i.d. with the common distribution 
\begin{align}
&\Lambda^{\bnu}(dx(0),da(0),\ldots) \eqdef \nu_{0,\sX}(dx(0)) \otimes \pi_0^{\bm}(da(0)|x(0)) \nonumber \\
&\phantom{xxxxxxxxxxxx}\bigotimes_{t=0}^{\infty} \bigg(p_t^{\nu_{t,\sX}}(dx(t+1)|x(t),a(t)) \otimes \pi_{t+1}^{\bm}(da(t+1)|x(t+1))\bigg). \nonumber 
\end{align}
For each $N\geq1$, define the following empirical measure 
$$
\tilde{e}^N \eqdef \frac{1}{N} \sum_{i=1}^N \delta_{(\tilde{x}_i^N,\tilde{a}_i^N)}. 
$$
By Sanov's Theorem in weak convergence topology \cite[Theorem 2.2.1]{DuEl97} (in this argument, we do not need Sanov's Theorem in setwise convergence topology), empirical measures $\{\tilde{e}^N\}_{N\geq1}$ satisfy the LDP on $\P(\sX^{\infty}\times\sA^{\infty})$, where $\P(\sX^{\infty}\times\sA^{\infty})$ is endowed with weak convergence topology, with the rate function $R(\,\cdot\,|\Lambda^{\bnu})$. Note that for each $k\geq0$, the $k^{th}$-marginal of $\tilde{e}^N$ on $\sX\times\sA$ is $\tilde{b}^N(k)$. Since taking marginal of a measure is continuous with respect to the weak convergence topology, by the contraction principle (i.e., Theorem~\ref{contraction}), we can conclude that $\{(\tilde{b}^N(t))_{t\geq0}\}_{N\geq1}$ satisfies the LDP on $\P(\sX\times\sA)^{\infty}$ with the rate function
$$
\tilde{V}({\boldsymbol \gamma}) = \inf \left\{R(\Lambda|\Lambda^{\bnu}): \Lambda_t = \gamma_t \,\, \forall t\geq0\right\}.
$$
Since the rate function must be unique, we must have $\tilde{V} = \tilde{J}$. Recall that $ 
J(\bnu) = R(\nu_{0,\sX}\,|\,\mu(0)) + \tilde{J}(\bnu)$. Hence, $J$ can equivalently be written as 
\begin{align}
J(\bnu) &= R(\nu_{0,\sX}\,|\,\mu(0)) + \inf \left\{R(\Lambda|\Lambda^{\bnu}): \Lambda_t = \nu_t \,\, \forall t\geq0\right\} \nonumber \\
&= \inf \left\{R(\Lambda|\Lambda^{\bnu}_{\nu(0)}): \Lambda_t = \nu_t \,\, \forall t\geq0\right\} \,\, \text{(as $\Lambda_{0,\sX} = \Lambda_{0,\sX}^{\bnu} = \nu_{0,\sX}$)} \nonumber 
\end{align}
where
$
\Lambda^{\bnu}_{\nu(0)} \eqdef \nu(0) \bigotimes_{t=0}^{\infty} (p_t^{\nu_{t,\sX}} \otimes \pi_{t+1}^{\bm}) 
$
and $\nu(0) \eqdef \mu(0) \otimes \pi_0^{\bm}$. 
This completes the proof of Theorem~\ref{theorem2}.

\begin{remark}\label{remark1}
Instead of an LDP for the marginal empirical measures, we can also establish an LDP for empirical measures of the paths using the same method. To this end, for each $N\geq1$, define the empirical measure of the paths as follows:
$$
d^N \eqdef \frac{1}{N} \sum_{i=1}^N \delta_{(x_i^N,a_i^N)},
$$
where $x_i^N \eqdef (x_i^N(t))_{t\geq0}$ and $a_i^N \eqdef (a_i^N(t))_{t\geq0}$. Using the LDP for the empirical measures of the initial states and the noise variables, we can prove that $\{d^N\}_{N\geq1}$ satisfies the LDP on $\P(\sX^{\infty}\times\sA^{\infty})$ with the rate function 
$$
L(\varphi) = \inf \{R(\Theta\,|\,\mu(0)\otimes\Theta_w\otimes\Theta_v): \Xi(\Theta)=\varphi\}, 
$$
where $\Xi:\P(\sX\times\sW^{\infty}\times\sV^{\infty})\rightarrow \P(\sX^{\infty}\times\sA^{\infty})$. Here, $\Xi$ is defined as follows: let $(x(0),w(0),\ldots,v(0),\ldots) \sim \Theta$ and define recursively random elements
\begin{align}
x(t+1) = F_t(x(t),a(t),\L\{x(t)\},w(t)), \,\,\,\,
a(t) = G_t(x(t),v(t)), \,\, t\geq0. \nonumber
\end{align}
Then, we let $\Xi(\Theta) \eqdef \L\{x(0),a(0),x(1),a(1),\ldots\}$. Now, using simplified particle model, we can indeed equivalently characterize $L$ as follows:
$$
L(\varphi) = R(\varphi|\Lambda_{\nu(0)}^{\varphi}),
$$ 
where $\Lambda_{\nu(0)}^{\varphi} \eqdef \nu(0)\bigotimes_{t=0}^{\infty}(p_t^{\varphi_{t,\sX}}\otimes\pi_{t+1}^{\bm})$ and $\varphi_t$ is the $t^{th}$ marginal of $\varphi$ on $\sX\times\sA$. The same result has been proven in \cite{MoZa03} under more restrictive conditions on the system components (see \cite[p. 59]{MoZa03}). There, authors have established LDP for $\{d^N\}_{N\geq1}$ by transferring LDP for the simplified particle model introduced in Section~\ref{sub1sec3} to the mean-field game model by generalizing Laplace-Varadhan integral lemma. In our case, we transfer  LDP for initial states and noise variables to the original mean-field game model. This enables us to establish LDP under milder conditions on the system components.

\end{remark}

\section{Conclusion}\label{conc}

In this paper, we have developed a large deviations principle for empirical measures of state-action pairs of mean-field games under the mean-field equilibrium policy. The large deviations result has been established via contraction principle by transferring LDP for empirical measures of  initial states and noise variables. One interesting future research direction is to study the concentration bound for the same empirical measures. However, in this case, we need to establish that the mean-field equilibrium policy is Lipschitz continuous, which can, in general, be proved under quite restrictive convexity assumptions on the system components.


\providecommand{\bysame}{\leavevmode\hbox to3em{\hrulefill}\thinspace}
\providecommand{\MR}{\relax\ifhmode\unskip\space\fi MR }
\providecommand{\MRhref}[2]{%
  \href{http://www.ams.org/mathscinet-getitem?mr=#1}{#2}
}
\providecommand{\href}[2]{#2}

\end{document}